\definecolor{Burgundy}{RGB}{128,0,32}
\newcommand{\extlink}{~\ensuremath{{}^\text{\faExternalLink}}}
\newcommand*{\ibid}{\emph{ibid}\xperiod}
\renewcommand{\lstlistingname}{Code excerpt}
\newcommand{\LClistingname}{\MakeLowercase{\lstlistingname}}
\newcommand{\code}[1]{\lstinline{#1}}
\newcommand{\mcode}[1]{\mathtt{\ #1\ }}
\newcommand*{\lean}[1][3]{Lean~{#1}\xspace}
\newcommand*{\mathlib}{\texttt{mathlib}\xspace}
\theoremstyle{plain}
\newtheorem{theorem}{Theorem}[section]
\theoremstyle{definition}
\newtheorem{definition}[theorem]{Definition}
\theoremstyle{remark}
\newtheorem{remark}[theorem]{Remark}
\newlist{listResults}{enumerate}{1}
\setlist[listResults]{label={\arabic*}.}
\newlist{sublistResults}{enumerate}{1}
\setlist[sublistResults]{label={\arabic{listResultsi}}-\,\alph*\textup{)}}
\newlist{listConditions}{enumerate}{1}
\setlist[listConditions]{label=\textup{({\Alph*})}}
\newlist{listDef}{enumerate}{1}
\setlist[listDef]{label={\roman*}\textup{)}}
\newlist{listHyp}{enumerate}{1}
\newcommand*{\NN}{\mathbb{N}}
\newcommand*{\ZZ}{\mathbb{Z}}
\newcommand*{\wzmZZ}{\ZZ_{\mathtt{m}0}} 
\newcommand*{\RR}{\mathbb{R}}
\newcommand*{\QQ}{\mathbb{Q}}
\newcommand*{\CC}{\mathbb{C}}
\newcommand*{\FF}[1][p]{\mathbb{F}_{#1}}
\newcommand*{\m}{\mathfrak{m}}
\newcommand*{\av}[1][\empty]{a_{#1}}
\DeclareMathOperator{\Frac}{Frac} 
\newcommand*{\laurentseries}[1][{\FF}]{\ensuremath{#1(\!(X)\!)}} 
\newcommand*{\puiseuxseries}[1][{R}]{\ensuremath{#1\{\!\{X\}\!\}}} 
\newcommand*{\compl}[2][\empty]{\ensuremath{\widehat{#2}_{#1}}} 
\newcommand*{\FpXCompl}[1][{\FF}]{\compl{#1(X)}}  
\newcommand*{\FpXComplInt}[1][\FF]{\ball{\bigl(\FpXCompl[#1]\bigr)}}  
\newcommand*{\padicCompl}{\compl[(p)]{\QQ}}  
\newcommand*{\padicComplInt}{\ball{\bigl(\padicCompl\bigr)}}  
\newcommand*{\powerseries}[1][{\FF}]{#1{[\![X]\!]}} 
\newcommand*{\maxid}[1][R]{\mathfrak{m}_{#1}} 
\newcommand*{\rint}[1]{\mathcal{O}_{#1}} 
\newcommand*{\normalised}[1][v]{{#1}_{0}} 
\newcommand*{\ball}[1]{{#1}^\circ} 
\newcommand*{\filter}{\mathscr{F}}
\newcommand*{\princfilter}{\boldsymbol{\mathcal{P}}}
\DeclareMathOperator{\image}{Im} 
\begin{document}


\title{A Formalization of Complete Discrete Valuation Rings and Local Fields}


\author{María Inés de Frutos-Fernández}
\orcid{0000-0002-5085-7446}
\authornote{Both authors contributed equally to this research.}
\affiliation{%
	\department{Departamento de Matem\'aticas}
	\institution{Universidad Aut\'onoma de Madrid}
	\city{Madrid}
	\country{Spain}
}
\email{maria.defrutos@uam.es}

\author{Filippo Alberto Edoardo Nuccio Mortarino Majno di Capriglio}
\orcid{0000-0002-5318-9869}
\authornotemark[1]
\affiliation{%
	\department{Institut Camille Jordan UMR 5208}
	\institution{Université Jean Monnet Saint-Étienne}
	\city{Saint-Étienne}
	\country{France}
}
\email{filippo.nuccio@univ-st-etienne.fr}

\renewcommand{\shortauthors}{M.~I.~de Frutos-Fernández and F.~A.~E.~Nuccio Mortarino Majno di Capriglio}

\begin{abstract}
  Local fields, and fields complete with respect to a discrete valuation, are essential objects in commutative algebra, with applications to number theory and algebraic geometry. We formalize in Lean the basic theory of discretely valued fields. In particular, we prove that the unit ball with respect to a discrete valuation on a field is a discrete valuation ring and, conversely, that the adic
  valuation on the field of fractions of a discrete valuation ring is discrete. We define finite extensions of valuations and of discrete valuation
  rings, and prove some localization results. 
  
  Building on this general theory, we formalize the abstract definition and some fundamental properties of local fields. As an application, we show that finite extensions of the field $\QQ_p$ of $p$-adic numbers and of the field $\laurentseries$ of Laurent series over $\FF$ are local fields. 
  
\end{abstract}

\begin{CCSXML}
	<ccs2012>
	<concept>
	<concept_id>10003752.10003790.10002990</concept_id>
	<concept_desc>Theory of computation~Logic and verification</concept_desc>
	<concept_significance>500</concept_significance>
	</concept>
	<concept>
	<concept_id>10003752.10003790.10003792</concept_id>
	<concept_desc>Theory of computation~Proof theory</concept_desc>
	<concept_significance>500</concept_significance>
	</concept>
	</ccs2012>
\end{CCSXML}

\ccsdesc[500]{Theory of computation~Logic and verification}
\ccsdesc[500]{Theory of computation~Proof theory}

\keywords{formal mathematics, Lean, mathlib, algebraic number theory, local fields, discrete valuation rings.}


\maketitle

\section{Introduction}\label{sec:intro}
The vague idea that geometric intuition and algebraic rigor can fruitfully interact is an old theme, certainly predating the modern approach by Descartes and subsequently by Newton and Leibniz. 
In contemporary commutative algebra, and especially
after the advent of scheme theory by Grothendieck around 1960, this connection has become even tighter: geometric concepts and techniques are often borrowed for a wide range of applications. In this work we are concerned with the concept of \emph{local field}, a fundamental notion in algebraic number theory whose origin, and still many applications, comes from geometry and diophantine questions. Suppose, for instance, that one is interested in the integral solutions $(a_1,a_2,a_3)$ to $X_1^2+X_2^2+X_3^2=0$: clearly, since for positivity reasons the only \emph{real} solution is $(0,0,0)$, there cannot be any other integral solutions. On the other hand, $X_1^2+ X_2^2-3X_3^2=0$ certainly has real solutions, yet has no non-trivial solutions in $\FF[3]$, the field with $3$ elements, as can be seen by analyzing the finitely many possibilities. Hence, if $(a_1,a_2,a_3)$ is an integral solution to this equation, its reduction modulo 3 must be the trivial solution in $\FF[3]$, so that $a_1, a_2$ and $a_3$ are all divisible by $3$. This implies $a_1 = a_2 = a_3 = 0$, since otherwise we would have an equality $a_1^2 + a_2^2 = 3a_3^2$, in which the highest power of $3$ dividing the left hand side is even, while the highest power of $3$ dividing the right hand side is odd, yielding a contradiction.

From a geometric perspective, one can regard the previous as a kind of ``local analysis'': interpreting the primes as the points of a geometric object attached to $\ZZ$, if a ``global'' solution $(a_1,a_2,a_3)\in\ZZ^3$ exists, then for each prime $p$, we obtain a solution $(\overline{a_1},\overline{a_2},\overline{a_3})\in\FF^3$ by reducing the coordinates modulo~$p$ --- we can look at this as being a local solution ``around the point~$p$'' --- as well as the solution  $((a_1 : \RR),(a_2 : \RR),(a_3~:~\RR))\in\RR^3$ over the reals. In particular, if one of these local solutions fails to exists, this means that the equation cannot have a global solution. Yet, in this analogy, the fields $\FF$ and $\RR$ are very different: the former is finite, with trivial discrete geometry and of positive characteristic, while the latter has a rich metric structure and characteristic $0$. Now, for every $p$, one can rather consider the field of $p$-adic numbers $\QQ_p$ which is, in this perspective, a better analogue for $\RR$ than $\FF$: it has a metric with respect to which it is complete (Cauchy sequences converge), and it has characteristic $0$ (so, in particular, it contains $\QQ$). To define it, one observes that for every prime $p$ there exists an absolute value $\lvert\cdot\rvert_p$ on $\QQ$ for which numbers that are highly divisible by $p$ are close to $0$. Having an absolute value, it is possible to speak about convergence and about Cauchy sequences: exactly as for the euclidean absolute value, one can find Cauchy sequences that do not converge to any rational, and $\QQ_p$ is defined as the \emph{completion} of $\QQ$ with respect to~$\lvert\cdot\rvert_p$. It is the ``smallest'' field where all $\lvert\cdot\rvert_p$-Cauchy sequences converge. Crucially, it still bears strong connections with the prime $p$: for instance, if a monic polynomial in $\ZZ[X]$ has a simple root in $\FF$, then it also has one in $\QQ_p$. 

The local analysis described before is an example of the ``local-to-global'' principle: to solve a problem in $\ZZ$, or $\QQ$, one can try to solve it in $\QQ_p$, for all $p$, and in $\RR$. A famous application of the local-to-global principle is the Hasse–Minkowski Theorem, which states that if $Q(x)$ is a quadratic form with rational coefficients, then the equation $Q(x)=0$ has a nontrivial solution over $\QQ$ if and only if it has a nontrivial solution over $\RR$ and over every $\QQ_p$.

For another notorious example, it has long been known that the so-called ``first case'' of Fermat's Last Theorem can be solved using class field theory (see~\cite{LenSte97}), which is a deep theory where the global theorems are obtained by a beautiful patching of the local ones: we refer to \cite{Tat67}~for an excellent account of this example. It should also be emphasized that ultimately Wiles' very proof of Fermat's Last Theorem makes heavy use of the local-to-global principle in the framework of Galois representations: we refer to~\cite{Fre09} for a summary of this technique.

Before describing our work, let us mention two generalizations of the $p$-adic numbers that will guide our approach to the formalization. First of all, the field $\QQ$ of rational numbers can be replaced by an arbitrary number field, or global field. We refer to~\cite[\S2]{BaaDahNarNuc22} for a brief overview of these notions in the context of formalization. Number fields are fields that can be obtained by adjoining roots of polynomials $f(X)\in\ZZ[X]$ to $\QQ$, like $\QQ(i)$ or $\QQ(\sqrt{2})=\{r+s\sqrt{2}, r,s\in\QQ\}$. They are a key object of study in algebraic number theory and resemble in many perspectives the rational field. Every number field $K$ can be realized as the field of fractions $K=\Frac{\rint{K}}$ of a suitable subring $\rint{K}\subseteq K$, in the same vein as $\QQ=\Frac{\ZZ}$. These rings $\rint{K}$ are not principal, in general, but they are Dedekind domains (see Example~\ref{example:adic_valuation} and the references~\ibid): so, although we cannot in general define an absolute value $\lvert\cdot\rvert_\pi$ on $\rint{K}$ associated to elements $\pi\in\rint{K}$, it is possible to define an absolute value $\lvert\cdot\rvert_{\maxid[\empty]}$ associated to every maximal ideal of $\rint{K}$. The completion procedure discussed above can then be performed analogously, and the fields $\compl[{\maxid[\empty]}]{K}$ obtained in this way are all finite extensions of some $\QQ_p$, and are the so-called mixed characteristic local fields. Analogous constructions exist when replacing $\QQ$ by $\FF(X)$, leading to equal characteristic local fields, and we refer to~\S\ref{subsec:eq_char} for more details. There is also a more abstract definition of local fields, that includes the above two, that is recalled and formalized in Definition~\ref{def:local_field}.

Secondly, unlike the euclidean metric, the $p$-adic one carries a strong discrete flavor, in that only integral powers of~$p$ can occur as values $\lvert x\rvert_p$ for $x\in\QQ\setminus\{0\}$. This happens in other contexts, and the relevant notion is that of a discrete valuation ring; we will describe their theory in more detail in~\S\ref{subsection:def_dvr}. Here we content ourselves with saying that they are a class of rings~$R$ endowed with a valuation map $v$ to $\ZZ\cup\{\infty\}$ such that the arithmetic behavior of elements $r\in R$ can be read on the value $v(r)$. The relation of discrete valuation rings with local fields is deep: the completion of the fraction field of a discrete valuation ring with finite residue field (see~\S\ref{sec:local_fields}) is a local field and, conversely, for every local field $F$ the subset $\{x \in F \:\vert\: \lvert x\rvert_F\leq 1\}$ is always a discrete valuation ring, yet not every discrete valuation ring is of this form.

In this paper, we formalize in \lean the definition and basic theory of local fields and their relation with discrete valuation rings. We provide both a concrete approach, defining mixed characteristic and equal characteristic local fields as finite field extensions of $\QQ_p$ and $\laurentseries$, and a more abstract definition of local field which comprises these two special cases. 

The \mathlib library provides 
a very complete theory of general valuations, as well as the basic theory of discrete valuation rings, but the connection between these two notions is largely missing. We formalize the definition of discrete valuation and greatly expand the available theory for discrete valuation rings, their fields of fractions, and their associated discrete valuations, including theorems about completions, extensions of valuations and localizations.
To the authors' best knowledge, this work constitutes the first formalization of local fields and of extensions of discrete valuations in any proof assistant.

The \lean[\empty] code for our formalization is available at a public GitHub repository\href{https://github.com/mariainesdff/local_fields_journal}{\extlink}. Inside the folder \code{from_mathlib}\href{https://github.com/mariainesdff/local_fields_journal/tree/master/src/from_mathlib}{\extlink} we gather files that are previous work of the authors, as well as a file due to Yaël Dillies \href{https://github.com/mariainesdff/local_fields_journal/blob/master/src/from_mathlib/PR18604_well_founded.lean}{\extlink}, used in this project after securing permission from them. Everything else is new, original work of the authors of this paper, totaling about 8,000 lines of code. Some of the \LClistingname s below have been edited for clarity, but we also provide links to the corresponding results in our repository. In some \LClistingname s the so-called \emph{dot notation} is exhibited: for example, the call \code{(ideal.is_prime I)} can be shortened to \code{I.is_prime}.

Throughout this paper, all rings are assumed to be commutative and unitary.

\subsection{\texorpdfstring{\lean[\empty] and \mathlib}{Lean and mathlib}}
This project is formalized in the \lean theorem prover \cite{Lean3}, which is based on dependent type theory, with proof irrelevance, quotient types, and non-cumulative universes \cite{TypeTheory}.

We build our work on top of the mathematical library \mathlib of \lean{\empty}, whose key properties are its unified approach to integrate different mathematical theories, and its decentralized nature, with over 300 contributors. One of the key tools for organizing mathematical hierarchies used both in \lean[\empty]\kern-.5em's core library and in \mathlib is typeclass inference. By declaring certain terms as \code{instance}, \lean[\empty] will automatically try to infer the relevant values during the unification procedure. We refer to~\cite{Baa22} for a more detailed discussion of the role of \code{instance} parameters in \lean[\empty]\kern-.5em.

The formalization is in \lean because at the time when we began it, almost none of its prerequisites were available in \lean[4]. Now that the complete \mathlib has been translated to \lean[4], we plan to port our work and start to integrate it in this library. All our future work related to this project will be directly implemented in \lean[4].

\subsection{Paper outline}
In \S\ref{section:dvr}, we treat the general theory of discrete valuations and discrete valuation rings: after presenting some background material in \S\ref{subsection:wzmZZ}--\S\ref{subsection:def_dvr}, we describe our main results in \S\ref{subsec:complete_DVR}, concerning completions, and \S\ref{subsec:extensions}, concerning extensions of valuations. In \S\ref{def:local_field} we present our formalization of local fields, describing the equal and mixed characteristic cases in \S\ref{subsec:eq_char} and \S\ref{subsec:mixed_char}, respectively. We conclude in \S\ref{sec:conclusion} with a description of future and related work and a discussion of some of our key design choices.

\section {Discrete Valuation Rings}\label{section:dvr}

In this section we present our formalization of the theory of discretely valued fields and discrete valuation rings, including results about completions and extensions of valuations. As a technical prerequisite, we discuss in \S~\ref{subsection:wzmZZ} the definition and main properties of the \mathlib type $\wzmZZ$, where most of the valuations that we considered take their values.

\subsection{\texorpdfstring{The type $\wzmZZ$}{The type with\_zero multiplicative Z}}\label{subsection:wzmZZ}
There are several situations in mathematics in which additive structures get translated into multiplicative structures. For example, associated to each real number $n \ne 0$, there is an exponential map $\exp_n : \RR \to \RR$ sending $x$ to $n^x$. This map has the property that
$\exp_n(x + y) = \exp_n(x) \cdot \exp_n(y)$ and, whenever $n > 1$, it preserves the order on $\RR$.

A second example occurs in the theory of valuations. If $\av : R \to \ZZ \cup \{\infty\}$ is an additive valuation on a ring $R$ (see Definition~\ref{def:add_valuation}) and $n > 1$ is a real number, it is common to study the associated function $v \colon R \rightarrow n^{\ZZ} \cup \{0\}$ sending $r$ to $n^{-\av(r)}$, with the convention that $n^{-\infty} = 0$. Here
\[ n^{\ZZ} \cup \{0\}=\{n^a \:\vert\: a \in \ZZ\} \cup \{0\}\subseteq \RR. \]

Some abstractions that can be used to formalize this kind of translations are available in \mathlib. Given any type \code{A} endowed with an additive structure, \mathlib provides a new type \code{multiplicative A} that is in bijection with \code{A} and carries a multiplicative structure, together with a map
\code{of_add} : \code {A} $\rightarrow$ \code{multiplicative A} satisfying
\[
    \mcode{of\_add\; (x + y) = of\_add\; x\; *\; of\_add\; y}
\]
for all \code{x, y : A}. If \code{A} is equipped with a preorder, then so is \code{multiplicative A}, and the map \code{of_add} is strictly monotone: \code{x} $\le$ \code{ y} is equivalent to \code{of_add x} $\le$ \code{of_add y}. The map \code{to_add : multiplicative A} $\to$ \code{A} is inverse to \code{of_add}.

We are especially interested in the case \code{A} $= \ZZ$. As an additive group, $\ZZ$ is an infinite cyclic group generated by~$1$ or~$-1$. Correspondingly, the elements \code{of_add(1)} and \code{of_add(-1)}, that do not bear a specific notation, are the only generators of the cyclic group \code{multiplicative}\;$\ZZ$. Since the map \code{of_add} preserves the unit element, we have that \code{of_add(0) = 1} is the unit, hence
\[
\text{\code{(1:}}\;\text{\code{multiplicative} }\ZZ)\neq\text{\code{of_add(1:}}\ZZ\text{\code{)}}.
\]
Given two integers $a, b$, it holds that \code{of_add(a * b) = of_add(a)}$^b$\code{= of_add(b)}$^a$. In particular, we have the equality \code{of_add(}$\pm$\code{1)}$^n$= \code{of_add(}$\pm$\code{n)} for all \code{(n :} $\ZZ$\code{)}.

A new type \code{with_zero (multiplicative} $\ZZ$\code{)}, denoted $\wzmZZ$, can be constructed from \code{multiplicative} $\ZZ$ by adding an extra term \code{0} to \code{multiplicative} $\ZZ$. The resulting type $\wzmZZ$ still carries a multiplication, that extends the one on \code{multiplicative} $\ZZ$ and for which \code{0 * x = 0} for all \code{x}; moreover, setting \code{0} $\le$ \code{x} for all \code{x}, $\wzmZZ$ is endowed with an order that extends the one on \code{multiplicative} $\ZZ$. The inclusion
\[
\text{\code{multiplicative} }\ZZ \rightarrow \wzmZZ
\]
is an order-preserving coercion that respects the multiplication on both sides, and we omit it from the notation.

The type $\wzmZZ$ provides an abstraction of the structure on the set $n^{\ZZ} \cup \{0\}$ that does not require a choice of the base $n$, and it will be the codomain of most of the valuations that we consider in this paper.

\subsection{Valuations}\label{subsection:valuations}
\begin{definition}[{\cite[Chapitre~VI, \S3, n$^\circ$1, D\'efinition~1]{Bou85}}]\label{def:add_valuation}
Let $R$ be a ring. A function $\av \colon R \to \ZZ \cup \{\infty\}$ is an \emph{additive valuation} on $R$ if
\begin{listDef}
	\item $\av(r) = \infty$ if and only if $r = 0$; \label{def_add_valuation:zero}
	\item $\av (r\cdot s) = \av(r) + \av(s)$ for all $r, s$ in $R$;\label{def_add_valuation:mul}
	\item $\av (r + s) \geq \min \{\av(r), \av(s)\}$ for all $r, s$ in $R$. \label{def_add_valuation:add}
\end{listDef}
\end{definition}

Given an additive valuation $\av$ on $R$, we can define an associated function $v \colon R \rightarrow \wzmZZ$ by sending $0$ to $0$ and $r \ne 0$ to \code{of_add(-}$\av$\code{(r))}. From the definition of $\av$, it is easy to deduce that $v$
satisfies the properties of a multiplicative valuation, as in the following definition:
\begin{definition}\label{def:valuation}
A \emph{multiplicative valuation} on a ring $R$ is a function $v\colon R \rightarrow \wzmZZ$ satisfying the properties
\begin{listDef}
	\item $v (0) = 0$;\label{def_valuation:zero}
	\item $v (1) = 1$;\label{def_valuation:one}
	\item $v (x \cdot y) = v (x) \cdot v (y)$ for all $x, y \in R$;
 \label{def_valuation:mul}
	\item $v (x + y) \le \max \{v (x), v (y)\}$ for all $x, y \in R$.\label{def_valuation:add}
 \end{listDef}

\end{definition}
Note that an element $r \in R$ has additive valuation $1$ if and only if it has multiplicative valuation \code{of_add(-1)}. Elements of multiplicative valuation \code{of_add(-1)} will play a prominent role in \S\ref{subsection:def_dvr}.

For example, let $R = \ZZ$ be the ring of integers, and fix a prime number $p$. Then, thanks to unique factorization, we can define an additive valuation $\av[p] \colon \ZZ \to \ZZ \cup \{\infty\}$ by sending an integer $z$ to the number of times that $p$ appears in the factorization of $z$. We can extend the function $\av[p]$ to $\QQ$ by defining $\av[p](r/s) := \av[p](r) - \av[p](s)$. Then $\av[p]$ is an additive valuation on $\QQ$, called the \emph{additive $p$-adic valuation} on $\QQ$. 
The corresponding multiplicative valuation $v_p \colon \QQ \to \wzmZZ$ is called the \emph{$p$-adic valuation}.


\begin{example}[The {$\maxid[\empty]$}-adic valuation]\label{example:adic_valuation}
We will often consider the following generalization of the $p$-adic valuation. If $R$ is a Dedekind domain (see~\cite[Chapitre~VII, \S2]{Bou85}) which is not a field, then every nonzero ideal of $R$ can be factored as a product of maximal ideals, uniquely up to reordering. Therefore, for every maximal ideal $\maxid[\empty]$ of $R$, we can follow an analogous construction to define an additive valuation $\av[{\maxid[\empty]}] \colon R \to \ZZ \cup \{ \infty\}$ on $R$ by sending $0$ to $\infty$ and any nonzero $r \in R$ to the number of times that $\m$ appears in the factorization of the principal ideal $(r)$. We extend~$\av[{\maxid[\empty]}]$ to the fraction field $K$ of $R$ by the formula $\av[{\maxid[\empty]}](r/s):=\av[{\maxid[\empty]}](r)-\av[{\maxid[\empty]}](s)$. The corresponding multiplicative valuation $v_{\maxid[\empty]}=\mcode{of\_add}\circ (-\av[{\maxid[\empty]}])\colon K \to \wzmZZ$ on $K$ is called the \emph{$\m$-adic valuation}. This valuation was formalized in \cite{deF22}, and is available in \mathlib as the declaration~\code{is_dedekind_domain.height_one_spectrum.valuation}\href{https://leanprover-community.github.io/mathlib_docs/ring_theory/dedekind_domain/adic_valuation.html#is_dedekind_domain.height_one_spectrum.valuation}{\extlink}.
\end{example}

While in most number theory references it is more common to work with additive valuations than with multiplicative ones, for historical reasons \mathlib follows the opposite convention, and it provides a much more complete API for multiplicative valuations. Hence we use multiplicative valuations in our formalization and, throughout the paper, whenever we use the term "valuation" without further adjectives, we mean "multiplicative valuation".

\begin{remark}\label{rmk:value_group}(For experts)
More generally, the codomain $\wzmZZ$ in the definition of a (multiplicative) valuation can be replaced by $\Gamma_0=\Gamma\cup\{0\}$, where $\Gamma$ is a linearly ordered commutative group. The order on $\Gamma$ is extended to $\Gamma_0$ analogously as we did for $\wzmZZ$. Structures like $\Gamma_0$ are called ``groups with zero''.
    
An example of this situation occurs when starting with a \emph{local} Dedekind domain $R$, with unique maximal ideal $\maxid$ and field of fractions $K$. In this setting one can prove that the quotient map
 \[
v \colon K\rightarrow K/R^\times
\]
is a valuation, where the quotient $K/R^\times$ is ordered by reverse divisibility. The group with zero $K/R^\times=K^\times/R^\times\cup\{0\}$ is called the value group\footnote{We translate in this way the term \emph{groupe des valeurs} from~\cite[Chapitre~VI, \S3, n$^\circ$2]{Bou85}.} of $R$. The value group is implemented in \mathlib as \code{valuation_ring.value_group}\href{https://leanprover-community.github.io/mathlib_docs/ring_theory/valuation/valuation_ring.html#valuation_ring.value_group}{\extlink}, but we mostly stick to $\wzmZZ$-valued valuations in our work.
\end{remark}

We represent valuations using the \code{valuation}\href{https://leanprover-community.github.io/mathlib_docs/ring_theory/valuation/basic.html#valuation}{\extlink} structure
available in \mathlib, which encodes Definition \ref{def:valuation}. A valuation on a ring $R$ induces a topology that is homogeneous with respect to addition (see~\cite[Chapitre~III,\S1]{Bou71}: every neighborhood $\Omega\subseteq R$ is of the form $\Omega=x+\Omega_0$ where $\Omega_0$ is a neighborhood of $0$ and $x\in\Omega$. In other words, addition (and subtraction) are homeomorphisms of the ring into itself. This property shows that to characterize the topology it is enough to describe the neighborhoods of $0$.
 
 When $R$ is a Dedekind domain and $v=v_{\maxid[\empty]}$ is the topology associated to a maximal ideal $\maxid[\empty]$, as defined in Example~\ref{example:adic_valuation}, 
 a basis for the neighborhoods of $0$ is provided by the sets
 \[
 U_\gamma = \{r \in R \big\vert\,v_{\maxid[\empty]}(r)\leq \gamma\}\quad\text{ for }\quad\gamma \colon \wzmZZ,\; \gamma \ne 0.
 \]

In particular, two elements $r_1,r_2$ are ``close to each other'' 
if their difference lies in a sufficiently deep neighborhood of~$0$, meaning that $v_{\maxid[\empty]}(r_1-r_2)\leq \gamma$ for a suitable $ \gamma \colon \wzmZZ, \gamma \ne 0$. Given the definition of $v_{\maxid[\empty]}$ this translates into the fact that the principal ideal $(r_1-r_2)$ is divisible by a high power of $\maxid[\empty]$.

Actually, the above topology is of a special kind, because the valuation defines a structure of \emph{uniform space} on $R$, as explained in~\cite[Chapitre~VI, \S5]{Bou85} and in the references \ibid. This is a simultaneous generalization of the structure of metric space and of topological group and the topology is induced by the uniform structure. Our main reference for uniform structures is~\cite[Chapitre~II]{Bou71}; for a throughout discussion of the formalization of uniform spaces, we urge the reader to consult the  beautifully written~\cite{BuzComMas20}, in particular its~\S5.
 
The \mathlib library also provides the class \code{valued}\href{https://leanprover-community.github.io/mathlib_docs/topology/algebra/valuation.html#valued}{\extlink}, which bundles together a ring $R$ endowed with a uniform space structure and a distinguished valuation inducing it. Given a term  (\code{hv : valued R} $\wzmZZ$), these can be accessed through \code{hv.to_uniform_space} and \code{hv.v}, respectively.

This class is designed for rings in which there is a preferred valuation. Recall from Example \ref{example:adic_valuation} that 
on a Dedekind domain $R$, there is a valuation for each maximal ideal, and it can be shown that any nontrivial valuation on $R$ is of this form. Hence, if $R$ is local, then the only nontrivial valuation on $R$ is the  $\maxid$-adic valuation associated to its unique maximal ideal, and we declare a \code{valued} instance on $R$. If $R$ is not local, given any maximal ideal $\maxid[\empty]\subseteq R$  we can still define a term \code{(hv}$_{\maxid[\empty]}$\code{: valued R}\;$\wzmZZ$\code{)} representing the $\maxid[\empty]$-adic valuation and allowing us to access the whole \code{valued} API locally. Yet, we would not declare this as a global \code{valued} instance on $R$, since none of the $\maxid[\empty]$-adic valuations on $R$ is preferred over the others.

Given a ring $R$ with a valuation $v$, we can consider the \textit{unit ball} of $R$, that is the subring $\ball{R}$ of elements with valuation less than or equal to (\code{1:} $\wzmZZ$). This subring is called \code{v.integer}\href{https://leanprover-community.github.io/mathlib_docs/ring_theory/valuation/integers.html#valuation.integer}{\extlink} in \mathlib. 
When $K$ is a field, the subring $\ball{K}$ is a \emph{valuation subring}, meaning that for every $\alpha\in K$, either $\alpha \in \ball{K}$ or $\alpha^{-1} \in \ball{K}$: in particular, $K\cong\Frac(\ball{K})$. Since the definition of valuation subring involves taking inverses, it is only available for fields. Hence, when working with a general ring $R$, we formalize $\ball{R}$ as \code{v.integer}, but when working with a field we use the richer structure \code{v.valuation_subring}\href{https://leanprover-community.github.io/mathlib_docs/ring_theory/valuation/valuation_subring.html#valuation_subring}{\extlink}, which gives us access to results about valuation subrings available in~\mathlib. 

\subsection{Discrete Valuation Rings}\label{subsection:def_dvr}

Our general references for the theory of discrete valuation rings are~\cite[Chapitres~I--II]{Ser62} and~\cite[Chapitre~VI]{Bou85}. One notable difference with our language is that both references consider additive valuations, whereas we stick to the multiplicative convention introduced in \S\ref{subsection:valuations}, which is the approach chosen in \mathlib. 
Mathematically, the two points of view are equivalent: one just needs to keep in mind the translation between $\ZZ \cup \{\infty\}$ and $\wzmZZ$. Finally, we refer to~\cite{Bou07}, in particular to \cite[Chapitre~IV]{Bou07}, for the main results about ring theory and commutative algebra that we will need.

Let $R$ be a ring, as above assumed commutative and unitary. We begin by recalling the following equivalence:
\begin{theorem}[{\cite[Chapitre~VI, \S3, n$^\circ$6, Proposition~9]{Bou85}}]\label{thm:dvr_tfae}
Suppose $R$ is a Noetherian local ring that is not a field. The following properties are equivalent:
\begin{listResults}
\item The maximal ideal $\maxid$ of $R$ is principal;\label{dvr_tfae:generator}
\item $R$ is a principal ideal domain \textup{(}PID\textup{)};\label{dvr_tfae:PID}
\item $R$ is an integral domain that coincides with the unit ball of its fraction field $\Frac(R)$ with respect to a valuation $v\colon \Frac{R}\to\wzmZZ$.\label{dvr_tfae:pts_val}
\end{listResults}
\end{theorem}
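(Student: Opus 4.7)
The plan is to prove the equivalences cyclically as $(1) \Rightarrow (2) \Rightarrow (3) \Rightarrow (1)$, since each direction admits a direct construction from the hypotheses at hand.

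For $(1) \Rightarrow (2)$, suppose $\maxid = (\pi)$. The key ingredient is Krull's intersection theorem, which in a Noetherian local ring yields $\bigcap_{n \geq 0} \maxid^n = 0$. Consequently every nonzero $x \in R$ admits a unique $n(x) \geq 0$ with $x \in \maxid^{n(x)} \setminus \maxid^{n(x)+1}$; equivalently $x = u\pi^{n(x)}$ with $u \notin \maxid$, hence $u$ a unit. This factorization simultaneously shows that $R$ is an integral domain and that any nonzero ideal $I \subseteq R$ is generated by $\pi^m$ for $m := \min\{n(x) : x \in I \setminus \{0\}\}$; thus $R$ is a PID.

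For $(2) \Rightarrow (3)$, the unique nonzero prime of a local PID is $\maxid = (\pi)$, so every nonzero $r \in R$ factors, uniquely up to units, as $u \pi^n$. Define $v \colon \Frac(R) \to \wzmZZ$ on a nonzero fraction $x = u\pi^n / u' \pi^{n'}$ by $v(x) := \mcode{of\_add}(n' - n)$, and set $v(0) := 0$. A routine check confirms well-definedness and the axioms of Definition~\ref{def:valuation}; moreover $v(x) \leq 1$ iff $n \geq n'$ iff $x \in R$, so $R = \ball{\Frac(R)}$.

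For $(3) \Rightarrow (1)$, set $K := \Frac(R)$ and consider $\Gamma := \image(v|_{\unit{K}}) \subseteq \unit{(\wzmZZ)} \cong \ZZ$, a subgroup under $\mcode{to\_add}$. Since $R$ is not a field, $R \subsetneq K$, so $\Gamma$ contains some element strictly greater than $1$ and is therefore nontrivial; as a nontrivial subgroup of $\ZZ$ it equals $d\ZZ$ for a unique $d > 0$. Choose $\pi \in R$ with $v(\pi) = \mcode{of\_add}(-d)$, the largest value strictly below $1$ attained by $v$. For any $x$ in the maximal ideal of $R$ one has $v(x) < 1$, hence $v(x) \leq v(\pi)$, so $v(x \pi^{-1}) \leq 1$; because $R$ is the full unit ball of $K$, this forces $x \pi^{-1} \in R$ and therefore $x \in (\pi)$. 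The main obstacle will be precisely this direction: extracting a uniformizer requires identifying the minimum positive element of a subgroup of $\ZZ$, and in the formalization one must carefully translate between the multiplicative $\wzmZZ$ and the additive $\ZZ$ while also leveraging $R \neq K$ to ensure that $\Gamma$ is nontrivial; invoking Krull's intersection theorem in the first step is another nontrivial ingredient but should already be available in \mathlib's Noetherian infrastructure.
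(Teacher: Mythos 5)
Your implication $(1)\Rightarrow(2)$ has a genuine gap at the point where you assert that the factorization $x=u\pi^{n(x)}$, with $u$ a unit, ``simultaneously shows that $R$ is an integral domain.'' It does not. Consider $R=k[X]/(X^2)$: this is a Noetherian local ring which is not a field, its maximal ideal $\maxid=(\bar X)$ is principal, Krull's intersection theorem gives $\bigcap_n\maxid^n=0$, and every nonzero element factors as $u\bar X^n$ with $u$ a unit and $n\in\{0,1\}$ --- yet $\bar X\cdot\bar X=0$, so $R$ is not a domain and certainly not a PID. The missing ingredient is the non-nilpotence of $\pi$, which nothing in your argument supplies. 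The underlying reason is that the theorem is not literally correct as stated: both Bourbaki (the cited Proposition~9) and Serre (Chapitre~I, \S2, Proposition~2, quoted in the remark that follows the theorem) include \emph{intègre}, i.e.\ that $R$ is an integral domain, in the hypotheses, and the paper has silently dropped it. With that hypothesis restored, your $(1)\Rightarrow(2)$ becomes correct: the factorization then directly yields that every nonzero ideal is $(\pi^m)$ with $m=\min n(x)$ over nonzero $x\in I$.

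The other two legs are sound. In $(2)\Rightarrow(3)$ you correctly identify $\maxid=(\pi)$ as the unique nonzero prime of the local PID (using that $R$ is not a field to guarantee $\maxid\neq0$), and the explicit $\wzmZZ$-valued $v$ is the right construction. In $(3)\Rightarrow(1)$ the passage from $R\neq\Frac(R)$ to the nontriviality of $\Gamma=\image(v|_{\unit{K}})\subseteq\ZZ$, the choice of $\pi\in R$ with $v(\pi)=\mcode{of\_add}(-d)$ for $d$ the positive generator, and the argument $v(x)<1\Rightarrow v(x\pi^{-1})\leq1\Rightarrow x\in(\pi)$ are all correct; you could add the converse inclusion $(\pi)\subseteq\maxid$, which holds since $v(\pi)<1$ makes $\pi$ a non-unit. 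For comparison: the paper does not prove Theorem~\ref{thm:dvr_tfae} at all --- it is quoted from \cite{Bou85} as background --- and in the formalization only the equivalence $(2)\Leftrightarrow(3)$ is actually exercised, since \mathlib's \code{discrete_valuation_ring} encodes $(2)$ and the \LClistingname~\ref{code:serre_prop_1} together with \code{dvr_equiv_unit_ball} supply the two directions to and from $(3)$.
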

\begin{remark}
In~\cite[Chapitre~VI, \S3, n$^\circ$6, Proposition~8]{Bou85} and in~\cite[Chapitre I, \S2, Proposition~2]{Ser62} other equivalences are proved, but we will not need them. 
\end{remark}

A ring satisfying the equivalent properties of Theorem~\ref{thm:dvr_tfae} is called a discrete valuation ring (often shortened as DVR). The nomenclature is motivated by point~\ref{dvr_tfae:pts_val}~\ibid: given a DVR $R$, it is possible to find a valuation
\[
v\colon \Frac(R)\rightarrow \wzmZZ
\]
such that $R=\ball{(\Frac R)}$ : yet this valuation is not unique. Indeed, the definition of the action of $\ZZ$ on $\wzmZZ$ shows that replacing a valuation $v$ by a power $v^e$ for $e\in \ZZ_{>0}$ leaves the unit ball unchanged. For each of these valuations, the image $\image(v)\subseteq\wzmZZ$ is the free group with zero generated by $v(r)$, where $r$ is any generator of the maximal ideal $\maxid$ of $R$. Upon replacing $v$ by $v^{1/a(r)}$ for some generator $r$ of $\maxid$, where $\av$ denotes the additive valuation associated to $v$, we can assume that the image is the whole $\wzmZZ$: in this case the valuation is said to be \emph{normalized}, and the elements of valuation equal to (\code{of_add}$(-1) : \wzmZZ$) are called \emph{uniformizers}.

One basic result is that, for a normalized valuation on a DVR, the uniformizers are exactly the set of generators of $\maxid$, because properties~\ref{def_valuation:one} and \ref{def_valuation:mul} of Definition~\ref{def:valuation} show that an element \mbox{$u\in R$} is a unit if and only if $v(u)=1=\;$(\code{of_add}$\mcode{0}\!)$. In particular, there exists a unique normalized valuation on a DVR, since the same argument shows that every valuation is uniquely determined by its value on one, or any, generator of $\maxid$. Moreover, as explained in Example~\ref{example:adic_valuation}, any such valuation can uniquely be extended to $\Frac(R)$. We denote this normalized valuation by ${\normalised}_{,R}$ --- or simply by $\normalised$ when the DVR $R$ can be understood.

Assume now that $R$ is any integral domain (not necessarily a DVR) and that $K$ is a field of fractions for $R$. The points of view taken in~\cite[Chapitre~I]{Ser62} and in~\cite[Chapitre~VI, \S3, n$^\circ$6]{Bou85} concerning valuations on $K$ are not identical: in the former, the valuations occurring in point~\ref{dvr_tfae:pts_val} of Theorem~\ref{thm:dvr_tfae} take values in $\ZZ\cup\{\infty\}$ (or $\wzmZZ$, up to our translation), whereas in the latter they are valued in $K^\times/R^\times \cup\{\infty\}$, that translates to the value group $K/R^\times=K^\times/R^\times\cup\{0\}$ in our language, but under the assumption that there is an isomorphism $K^\times/R^\times\cong \ZZ$. In both cases they are called ``discrete''; they are said to be normalized, as above, when they are surjective. Often results are stated assuming that the valuation is normalized, relying on the possibility of achieving this normalization simply by rescaling, so that the theory of discrete valuations is essentially the theory of \emph{normalized} discrete valuations.

In the setting of a formalization work, the ambiguities described above concerning the group (with zero) $\Gamma_0$ and the normalization of the valuation call for more attention than in pen-and-paper mathematics. It is in principle possible to let $\Gamma_0$ vary, including it as a (perhaps implicit) variable: yet one must stipulate the existence of an isomorphism $\Gamma_0\cong\wzmZZ$, adding one more variable, so that we rather find it more convenient to consider only valuations that are $\wzmZZ$-valued, in line with the choice made when defining the adic valuation\href{https://leanprover-community.github.io/mathlib_docs/ring_theory/dedekind_domain/adic_valuation.html}{\extlink} attached to a height-one prime of a Dedekind domain. Even with this choice, the above discussion about different possible normalizations of a $\wzmZZ$-valued valuations, all leading to the same mathematical object, suggests that the focus should be put on \emph{normalized} valuations, rather than general ones. With this in mind, 
we call a valuation ``discrete'' when it is $\wzmZZ$-valued and \emph{also} normalized, encoding this notion in the following class\href{https://github.com/mariainesdff/local_fields_journal/blob/7ac213eb804fe7945468023527a0fe26ab23b3c8/src/discrete_valuation_ring/basic.lean#L129}{\extlink}:
\begin{lstlisting}[caption={Definition of discrete valuation.}, label={code:def_discrete}]
class is_discrete (v : valuation R $\wzmZZ$) := 
(surj : function.surjective v)
\end{lstlisting}

Observe that when the domain $R$ is endowed with a discrete valuation in the above sense, then it is necessarily a field. Henceforth we change perspective a bit and we focus on a topological field $K$ endowed with a valuation $v$, letting $\ball{K}$ be its unit ball: as discussed at the end of \S\ref{subsection:valuations} this is implemented by putting a \code{valued} instance \code{hv} on \code{K} and by setting 
\begin{lstlisting}
    K$_0$ = hv.v.valuation_subring
\end{lstlisting}

Now, with our definition, a valuation is discrete only if it takes values in the type $\wzmZZ$ and if there exists a uniformizer, and this we prove in the following form\href{https://github.com/mariainesdff/local_fields_journal/blob/7ac213eb804fe7945468023527a0fe26ab23b3c8/src/discrete_valuation_ring/basic.lean#L164}{\extlink}:
\begin{lstlisting}
lemma is_discrete_of_exists_uniformizer {$\pi$ : K} 
  (h$\pi$ : is_uniformizer v $\pi$) : is_discrete v
\end{lstlisting}
The lemma \code{exists_uniformizer_of_discrete}\href{https://github.com/mariainesdff/local_fields_journal/blob/7ac213eb804fe7945468023527a0fe26ab23b3c8/src/discrete_valuation_ring/basic.lean#L324}{\extlink} provides the reverse implication. Similarly, the aforementioned correspondence between uniformizers for a normalized valuation and generators of $\maxid$ takes the form\href{https://github.com/mariainesdff/local_fields_journal/blob/7ac213eb804fe7945468023527a0fe26ab23b3c8/src/discrete_valuation_ring/basic.lean#L295}{\extlink}
\begin{lstlisting}
lemma uniformizer_is_generator 
  ($\pi$ : uniformizer v) :
  maximal_ideal K$_0$ = ideal.span {$\pi.1$}
\end{lstlisting}
and the declaration \code{is_uniformizer_of_generator}\href{https://github.com/mariainesdff/local_fields_journal/blob/7ac213eb804fe7945468023527a0fe26ab23b3c8/src/discrete_valuation_ring/basic.lean#L348}{\extlink} represents the reverse implication. From now on, we call a valued field whose valuation is discrete a \emph{discretely valued field}.

The next result that we want to discuss is the formalization of~\cite[Chapitre~I, \S1, Proposition~1]{Ser62}, stating that the unit ball of a discretely valued field is a DVR. The notion of DVR was already in \mathlib at the time of our project, implemented through the class \code{discrete_valuation_ring}\href{https://leanprover-community.github.io/mathlib_docs/ring_theory/discrete_valuation_ring/basic.html#discrete_valuation_ring}{\extlink} --- this corresponds to property~\ref{dvr_tfae:PID} in Theorem~\ref{thm:dvr_tfae}. Our result takes the following form\href{https://github.com/mariainesdff/local_fields_journal/blob/7ac213eb804fe7945468023527a0fe26ab23b3c8/src/discrete_valuation_ring/basic.lean#L438}{\extlink}:
\begin{lstlisting}[caption={The unit ball is a DVR if the valuation is discrete.}, label={code:serre_prop_1}]
instance dvr_of_is_discrete [is_discrete v] :
  discrete_valuation_ring K$_0$
\end{lstlisting}

Suppose now that $R$ is a DVR, and let $K$ be a field of fractions for $R$. By Theorem~\ref{thm:dvr_tfae}, $R$ is a local ring that is a PID, so in particular $R$ is a Dedekind domain, and one can consider the adic valuation associated to its unique maximal ideal $\maxid$, as defined in Example~\ref{example:adic_valuation}. Now, Baanen~\emph{et al.}~formalized in~\cite{BaaDahNarNuc21} and~\cite{BaaDahNarNuc22} the general theory of Dedekind domains, and de~Frutos-Fernández formalized in~\cite{deF22} the main properties of adic valuations on Dedekind domains.

With these works at our disposal, our starting point is that the $\maxid$-adic valuation $v_{\maxid}$ associated to the maximal ideal $\maxid$ coincides with the normalized valuation $\normalised$ on $K$ (since $R$ is a domain, we directly extend these valuations to any of its fields of fractions). Although this is mathematically trivial, and the two functions $\normalised,v_{\maxid}$ 
are considered identical in pen-and-paper mathematics, they actually belong to different types and hence are different terms: in the \mathlib formalization, the valuation $\normalised$ takes values\href{https://leanprover-community.github.io/mathlib_docs/ring_theory/valuation/valuation_subring.html#valuation_subring.valuation}{\extlink} in the value group $K/R^\times$ (see Remark~\ref{rmk:value_group}), while $v_{\maxid}$ is $\wzmZZ$-valued\href{https://leanprover-community.github.io/mathlib_docs/ring_theory/dedekind_domain/adic_valuation.html#is_dedekind_domain.height_one_spectrum.valuation}{\extlink}. Our approach to compare them is to show that the unit balls with respect to both coincide. To do so, it is enough to show that $R$ is the unit ball of $K$ when endowed with the valuation $v_{\maxid}$, because this is tautologically true with respect to the valuation $\normalised$. To prove the isomorphism $R\cong \ball{K}$ we first provide the following\href{https://github.com/mariainesdff/local_fields_journal/blob/7ac213eb804fe7945468023527a0fe26ab23b3c8/src/discrete_valuation_ring/basic.lean#L456}{\extlink}
\begin{lstlisting}[caption={The \code{valued} instance on the field of frations of a DVR.}, label={code:valued_DVR}]
instance : valued K :=
(maximal_ideal R).adic_valued
\end{lstlisting}
giving a \code{valued} structure on $K$ by using the $\maxid$-adic valuation~$v_{\maxid}$. With this definition, the whole library concerning adic valuations is at our disposal. We then show that the valuation $v_{\maxid}$ is actually discrete (in our sense)  by providing the\href{https://github.com/mariainesdff/local_fields_journal/blob/7ac213eb804fe7945468023527a0fe26ab23b3c8/src/discrete_valuation_ring/basic.lean#L458}{\extlink}
\begin{lstlisting}[caption={The valuation on a DVR is discrete.}, label={code:discrete_on_DVR}]
instance : is_discrete (valued.v K $\wzmZZ$)
\end{lstlisting}
By combining this result with the above discussion we deduce that $\ball{K}$ is itself a DVR, and we finally prove the isomorphism $R\cong\ball{K}$ in the form of the following\href{https://github.com/mariainesdff/local_fields_journal/blob/7ac213eb804fe7945468023527a0fe26ab23b3c8/src/discrete_valuation_ring/basic.lean#L522}{\extlink} 
\begin{lstlisting}
def dvr_equiv_unit_ball : 
  R $\simeq^{ +*}$ valued.v.valuation_subring
\end{lstlisting}

\begin{remark}
When implementing valuations on a discrete valuation ring $R$ with field of fractions\footnote{See~\S\ref{subsec:implementation} for more details.} $K$, we made the design choice to put an instance of the \code{valued} class on $K$, but not on the ring $R$ itself. The reason is that, if needed, we can recover the
uniform structure on $R$ from the one on $K$, so we prefer not to duplicate this information. This choice is consistent with \mathlib's file \code{ring\_theory.dedekind\_domain.adic\_valuation}\href{https://leanprover-community.github.io/mathlib_docs/ring_theory/dedekind_domain/adic_valuation.html}{\extlink}, in which \code{valued} instances are only put on fields.
\end{remark}

As illustrated in the Introduction, one application of the theory of DVR's to number theory comes from the fact that every maximal ideal~$\maxid[\empty]$ in a Dedekind domain $R$ induces a discrete valuation, with respect to which the unit ball is a DVR. In this context, we prove the following lemma:\href{https://github.com/mariainesdff/local_fields_journal/blob/7ac213eb804fe7945468023527a0fe26ab23b3c8/src/discrete_valuation_ring/global_to_local.lean#L28}{\extlink}
\begin{lstlisting}
lemma adic_valued_is_discrete
  [is_dedekind_domain R] [is_fraction_ring R K]
  ($\maxid[\empty]$ : height_one_spectrum R) :
  is_discrete (adic_valued $\maxid[\empty]$).v
\end{lstlisting}

\subsection{Complete Discrete Valuation Rings}\label{subsec:complete_DVR}
One setting where the theory of DVR's becomes crucial --- especially for its applications to algebraic number theory and algebraic geometry --- is that of (adic) completions. The completion is a general procedure that can be performed on every uniform space $T$, as explained in~\cite[Chapitre~II, \S3]{Bou71}. We are not providing the exact definition here; the crucial property to retain is that it yields another uniform space $\compl{T}$, containing $T$ as a dense subspace, and such that every uniformly continuous function $f\colon T \to Z$, valued in any complete, separated topological space $Z$ can be extended \emph{uniquely} to a function
\begin{equation}\label{eq:univ_compl}
\compl{f}\colon\compl{T}\rightarrow Z
\end{equation}
(see~\cite[Chapitre~II, \S3, n$^\circ$6, Théorème~2]{Bou71}). When $T$ is a commutative topological group, it suffices that $f$ is a continuous group homomorphism to admit a unique extension, because \cite[Chapitre~III, \S3, n$^\circ$1, Proposition~3]{Bou71} guarantees that it is automatically uniformly continuous; and the completeness on $Z$ can be weakened as in~\cite[Chapitre~I, \S8, n$^\circ$5, Théorème~1]{Bou71}.

In the special case where $T=K$ is a field with a valuation~$v$ (endowed with the uniform structure induced by it as explained in \S\ref{subsection:valuations}), the completion $\compl{K}$ is again a field. We refer to~\cite[Chapitre~VI, \S5, n$^\circ$3]{Bou85} for the generalities concerning completions of valued fields, and to~\cite[Chapitre~II, \S1]{Ser62} for a shorter introduction. Applying\footnote{We do not describe here the topology on $\wzmZZ$, suffices it to say that it mimics the discrete one on $\ZZ$. We refer the reader to the relevant \mathlib file\href{https://leanprover-community.github.io/mathlib_docs/topology/algebra/with_zero_topology.html}{\extlink}.} the universal property~\eqref{eq:univ_compl} to the valuation $v$ yields a map $\compl{v}\colon \compl{K}\rightarrow \wzmZZ$ that is still a valuation. Completions of uniform spaces and uniform fields have been formalized\footnote{Given a ring $A$, an ideal $I\subseteq A$ and an $A$-module $M$, \mathlib contains the declaration \code{adic_completion I M}. This is defined purely algebraically as a module of ``coherent sequences'', a concrete incarnation of an inverse limit. A systematic comparison between this $I$-adic completion and the uniform one has not yet been formalized.}, and the paper~\cite{BuzComMas20} contains an account of the formalization. The application of this construction in the setting of Dedekind domains is due to de~Frutos-Fernández, see~\cite{deF22}. 

A prototypical example of a completion of a field is the field $\QQ_p$ of $p$-adic numbers, that is defined as the completion of~$\QQ$ endowed with the $p$-adic valuation $v_p$. Another example arises when starting with the ring $K[X]$ of polynomials in one indeterminate over the field $K$: being a PID, it is a Dedekind domain and the construction of Example~\ref{example:adic_valuation} defines a valuation $v\colon K(X)\to\wzmZZ$ attached to any maximal ideal $\maxid[\empty]\subseteq K[X]$. Take, for instance, $\maxid[\empty]=(X)$. Then the valuation $v_X$ defines a uniform structure on the field
\[
K(X)=\Frac(K[X])=\Bigl\{\frac{f}{g},\; f,g\in K[X]\text{ with }g\ne 0\Bigr\}
\]
of rational functions whose completion $\FpXCompl[K]$ is isomorphic to the field $\laurentseries[K]$ of Laurent series: we discuss our formalization of this isomorphism in~\S\ref{subsubsection:isom_laurent}.

When completing a field $K$ endowed with a valuation $v$, the valuation $\compl{v}$ is unique under the condition of extending the one on $K$, by~\eqref{eq:univ_compl}. It follows that $\compl{K}$ is always endowed with a global \code{valued} instance, even when $K$ is not. This is for example the case with $\QQ$ and $\QQ_p$: the first has infinitely many non-equivalent valuations $v_p$, and thus no global \code{valued} instance is declared for it. On the other hand, each $\QQ_p$ has a preferred valuation and is actually a valued field. More generally, given a maximal ideal $\maxid[\empty]$ of a Dedekind domain $R$, we can apply the above discussion to the valuation $v_{\maxid[\empty]}$ to obtain a complete valued field $\compl[{\maxid[\empty]}]{K}$\href{https://leanprover-community.github.io/mathlib_docs/ring_theory/dedekind_domain/adic_valuation.html#is_dedekind_domain.height_one_spectrum.valued_adic_completion}{\extlink}.

We now go back to our discussion concerning DVR's. So, let $R$ be a DVR with field of fractions $K$. We have seen in the \LClistingname~\ref{code:valued_DVR} that $K$, endowed with the valuation $v_{\maxid}$, is valued and so is its completion $\compl[\maxid]{K}$, denoted by \code{K_v} in our code. Moreover, the valuation on $K$ is discrete (see the \LClistingname~\ref{code:discrete_on_DVR}) and the first result we obtain is that the same holds for the valuation $\compl{v_{\maxid}}$ on $\compl[\maxid]{K}$, which takes the following form\href{https://github.com/mariainesdff/local_fields_journal/blob/7ac213eb804fe7945468023527a0fe26ab23b3c8/src/discrete_valuation_ring/complete.lean#L63}{\extlink}:
\begin{lstlisting}
instance (R : Type*) [is_dedekind_domain R]
  (v : height_one_spectrum R) : 
  is_discrete (valued.v K_v $\wzmZZ$)
\end{lstlisting}

In particular, we can consider the ring $\ball{\bigl(\compl[\maxid]{K}\bigr)}$ (denoted by \code{R_v} in our code) to find, by the discussion in~\S\ref{subsection:def_dvr}, that it is itself a DVR\href{https://github.com/mariainesdff/local_fields_journal/blob/7ac213eb804fe7945468023527a0fe26ab23b3c8/src/discrete_valuation_ring/complete.lean#L77}{\extlink}
\begin{lstlisting}
instance : discrete_valuation_ring R_v :=
discrete_valuation.dvr_of_is_discrete _
\end{lstlisting}
with field of fractions $\compl[\maxid]{K}$. Thus, $\ball{\bigl(\compl[\maxid]{K}\bigr)}$ is a local ring endowed with a maximal ideal, denoted $\compl{\maxid}$. It is therefore possible to once again apply de~Frutos-Fernández' work and endow $\compl[\maxid]{K}=\Frac\bigl(\ball{\bigl(\compl[\maxid]{K}\bigr)}\bigr)$ with the $\compl{\maxid}$-adic valuation $v_{\compl{\maxid}}$, which puts a (potentially) new \code{valued} structure on $\compl[\maxid]{K}$.

Now, as the notation suggests, the maximal ideal $\compl{\maxid}$, which is in particular an abelian group, actually coincides with the completion (inside the larger space $\compl[\maxid]{K}$) of the abelian group $\maxid$. In our language, this reflects on the equality of the two valuations $\compl{v_{\maxid}}$ and $v_{\compl{\maxid}}$ on $\compl[\maxid]{K}$, but this equality will clearly not be a definitional one, given the different constructions that led to the two valuations. Rather, it takes the following form:\href{https://github.com/mariainesdff/local_fields_journal/blob/7ac213eb804fe7945468023527a0fe26ab23b3c8/src/discrete_valuation_ring/complete.lean#L184}{\extlink}
\begin{lstlisting}
lemma adic_of_compl_eq_compl_of_adic (x : K_v) :
  v_adic_of_compl x = v_compl_of_adic x
\end{lstlisting}
In the above \LClistingname, \code{v_adic_of_compl}\href{https://github.com/mariainesdff/local_fields_journal/blob/7ac213eb804fe7945468023527a0fe26ab23b3c8/src/discrete_valuation_ring/complete.lean#L137}{\extlink} represents the valuation $v_{\compl{\maxid}}$, while \code{v_compl_of_adic}\href{https://github.com/mariainesdff/local_fields_journal/blob/7ac213eb804fe7945468023527a0fe26ab23b3c8/src/discrete_valuation_ring/complete.lean#L141}{\extlink} is $\compl{v_{\maxid}}$.

\subsection{Extensions of Complete Discrete Valuation Rings}\label{subsec:extensions}
The goal of this section is to prove that if $K$ is any field complete with respect to a discrete valuation $v$ and $L/K$ is a finite extension of fields, then there is a unique discrete valuation on $L$ ``extending'' $v$. To explain our formalization, we first need to discuss the relation between valuations and norms.

\begin{definition}\label{def:nonarchimedean_norm}
A \emph{nonarchimedean multiplicative norm} on a ring $R$ is a function $\lvert\cdot\rvert \colon R \to \RR$
such that
\begin{listDef}
\item $\lvert r\rvert = 0$ if and only if $r = 0$ for all $r \in R$;
\item $\lvert 1\rvert = 1$;
\item $\lvert r\cdot s\rvert  = \lvert r\rvert \cdot \lvert s\rvert $ for all $r, s$ in $R$;
\item $\lvert r + s\rvert \leq \max \{\lvert r\rvert, \lvert s\rvert \}$ for all $r, s$ in $R$;
\item $\lvert -r\rvert = \lvert r\rvert$ for all $r$ in $R$.
\end{listDef}
It follows from these conditions that $0 \le \lvert r\rvert$ for all $r \in R$.
\end{definition}
A valuation $v \colon R \to \Gamma_0$ valued in a group with zero $\Gamma_0$ has \emph{rank one} if it is nontrivial and there exists an injective morphism of linearly ordered groups with zero from $\Gamma_0$ to $\RR_{\ge 0}$. In particular, any nontrivial valuation $v \colon R \to \wzmZZ$ has rank one. The definition of rank one valuation was first formalized in \cite{deF23}\href{https://github.com/mariainesdff/norm_extensions_journal_submission/blob/d396130660935464fbc683f9aaf37fff8a890baa/src/rank_one_valuation.lean#L38}{\extlink}.

The conditions in Definitions~\ref{def:valuation} and~\ref{def:nonarchimedean_norm} are analogous --- apart from the codomain of the function --- and the terms ``multiplicative valuation'' and ``nonarchimedean multiplicative norm'' are often used interchangeably in the mathematical literature. While \mathlib does not yet provide a way to relate these two notions, a dictionary between them has been formalized in \cite{deF23}. Namely, if $K$ is a field with a nonarchimedean norm, the definition \code{valuation_from_norm}\href{https://github.com/mariainesdff/norm_extensions_journal_submission/blob/d396130660935464fbc683f9aaf37fff8a890baa/src/normed_valued.lean#L37}{\extlink} yields the corresponding valuation $v \colon K \rightarrow \RR_{\ge 0}$. Conversely, if $L$ is a field with a rank one valuation, then \code{norm_def}\href{https://github.com/mariainesdff/norm_extensions_journal_submission/blob/d396130660935464fbc683f9aaf37fff8a890baa/src/normed_valued.lean#L110}{\extlink} is the corresponding norm function on $L$.
To  make use of this dictionary in our project, we need to provide an injective morphism $\wzmZZ \to \RR_{\ge 0}$ of linearly ordered groups with zero. We can define this morphism by picking any real $n > 1$ and identifying $\wzmZZ$ with the subset $n^\ZZ \cup \{0\}$ of $\RR_{\ge 0}$, via the map sending $0$ to $0$ and \code{of_add(x)} to $n^x$, for $x \in \ZZ$.

We construct this morphism as\href{https://github.com/mariainesdff/local_fields_journal/blob/7ac213eb804fe7945468023527a0fe26ab23b3c8/src/for_mathlib/with_zero.lean#L157}{\extlink} 
\begin{lstlisting}
def with_zero_mult_int_to_nnreal {n : $\RR_{\ge 0}$}
  (he : n $\ne 0$)  : $\wzmZZ \to_{*0} \; \RR_{\ge 0}$ := 
{ to_fun := $\lambda$ x, if hx : n = 0 then 0 else 
    n^(to_add (with_zero.unzero hx)),
  ... }
\end{lstlisting}
and then we prove\href{https://github.com/mariainesdff/local_fields_journal/blob/7ac213eb804fe7945468023527a0fe26ab23b3c8/src/for_mathlib/with_zero.lean#L191}{\extlink} that this map is order-preserving whenever $n > 1$. Each choice of $n$ gives rise to a different morphism, and hence to a different norm attached to a valuation $v \colon K \to \wzmZZ$, although any two such norms define the same uniformity on $K$. When the quotient $\ball{K}/\maxid[\ball{K}]$ is finite of order $p^k$ for some prime $p$, the standard choice is $n = p^k$. Otherwise there is not a preferred $n$, so in our formalization we pick $n=6$. We refer to this $n$ associated to $v$ as \code{v.base}\href{https://github.com/mariainesdff/local_fields_journal/blob/7ac213eb804fe7945468023527a0fe26ab23b3c8/src/discrete_valuation_ring/basic.lean#L208}{\extlink}.

Let now $K$ be a complete discretely valued field and let $L$ be a finite field extension, so that $L/K$ is automatically algebraic. The above discussion allows us to apply the following theorem to conclude that there is a unique norm $\lvert\cdot\rvert_L$ on $L$ extending the norm $\lvert \cdot\rvert _K$ associated with the valuation on $K$.

\begin{theorem}\label{th:uniq}
Let $K$ be a field that is complete with respect to a nonarchimedean multiplicative norm $\lvert\cdot\rvert _K$ and let $L/K$ be an algebraic extension. Then     there is a unique multiplicative nonarchimedean norm on $L$, called the \emph{spectral norm}, extending the norm $\lvert\cdot\rvert_K$.
\end{theorem}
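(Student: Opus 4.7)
The plan is to reduce the statement to the case of a finite extension $L/K$ and then extend to the general algebraic case by a compatibility argument. For existence in the finite case, I propose the explicit formula $\lvert x\rvert_L := \lvert N_{L/K}(x)\rvert_K^{1/[L:K]}$, built from the field norm. Non-degeneracy, multiplicativity, the compatibility $\lvert x\rvert_L=\lvert x\rvert_K$ for $x\in K$ (via $N_{L/K}(x)=x^{[L:K]}$), and the symmetry $\lvert -x\rvert_L=\lvert x\rvert_L$ are all formal consequences of the analogous properties of $N_{L/K}$ and of $\lvert\cdot\rvert_K$.

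The substantial content is the ultrametric inequality. My strategy is to prove that the unit ball $\rint{L}:=\{x\in L:\lvert x\rvert_L\le 1\}$ coincides with the integral closure of $\rint{K}$ in $L$, which is automatically a subring of $L$; the inequality then follows by rescaling so that $\max\{\lvert x\rvert_L,\lvert y\rvert_L\}=1$ and using closure of $\rint{L}$ under addition. The inclusion ``integral over $\rint{K}$ implies in $\rint{L}$'' is straightforward: since $\rint{K}$ is a valuation ring, hence integrally closed, the minimal polynomial of such an $x$ lies in $\rint{K}[T]$, and $N_{L/K}(x)$ is a power of its constant term. The reverse inclusion is where completeness of $K$ is essential, and is typically obtained by a Newton-polygon analysis of the minimal polynomial of $x$ or, equivalently, by invoking Hensel's lemma to rule out nontrivial factorizations of a scaled version of that polynomial.

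For uniqueness in the finite case, the key point is that the above characterization of $\rint{L}$ is intrinsic: any multiplicative nonarchimedean norm $\lvert\cdot\rvert'$ on $L$ extending $\lvert\cdot\rvert_K$ produces the same unit ball by the same argument. Since $\lvert\cdot\rvert_K$ is nontrivial and rank-one, a nonarchimedean multiplicative extension is uniquely determined by its unit ball (using multiplicativity and elements of $K^\times$ of non-unit norm to pin down the scale), which forces $\lvert\cdot\rvert'=\lvert\cdot\rvert_L$. For the general algebraic case, I set $\lvert x\rvert_L$ for $x\in L$ to be the norm of $x$ in the finite subextension $K(x)\subseteq L$ as computed by the finite case; uniqueness in the finite case guarantees independence of any further enlargement of the subextension, yielding both existence and uniqueness on all of $L$ simultaneously.

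The main obstacle I anticipate is the completeness-based implication $\lvert x\rvert_L\le 1 \Rightarrow x$ integral over $\rint{K}$. A formalization will almost certainly need a clean \lean statement of Hensel's lemma (or equivalently of the Newton-polygon factorization theorem) for complete discretely valued fields as a prerequisite, adapted to the $\wzmZZ$-valued setting developed in the rest of the paper. Modulo this, the remaining verifications are organizational and can draw on \mathlib's existing API for field norms, integral closures, valuation subrings, and the norm-valuation dictionary already in place.
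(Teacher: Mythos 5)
Your proposal is correct but follows a genuinely different route from the paper's. The paper relies on the spectral-norm construction of BGR 3.2.4/2 and its \lean[\empty] formalization cited in the text: for $x\in L$ with monic minimal polynomial $f_x(T)=T^m+a_{m-1}T^{m-1}+\dots+a_0$, one sets $\lvert x\rvert_{\mathrm{sp}}=\max_{1\le i\le m}\lvert a_{m-i}\rvert_K^{1/i}$, the spectral value of $f_x$. This is a power-multiplicative nonarchimedean norm on \emph{any} algebraic extension of $K$, with no completeness assumption; completeness is invoked only to upgrade power-multiplicativity to multiplicativity and to obtain uniqueness, and there is no finite-to-algebraic patching step. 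You instead take the norm-map formula $\lvert x\rvert_L=\lvert N_{L/K}(x)\rvert_K^{1/[L:K]}$ on finite extensions as the starting point, establish the ultrametric inequality by identifying the unit ball with the integral closure of $\rint{K}$ in $L$ (with Hensel's lemma carrying the hard inclusion), deduce uniqueness from the intrinsic description of that unit ball, and patch to the algebraic case through the finite subextensions $K(x)$. Both of your intermediate facts in fact appear in the paper's formalization --- the identity $\lvert x\rvert_L=\lvert a_0\rvert_K^{1/m}$ as \code{spectral_norm_eq_root_zero_coeff}, and the unit-ball-equals-integral-closure statement as \code{integral_closure_eq_integer} --- but there they are \emph{consequences} of the spectral-norm machinery rather than its inputs. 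The paper's route avoids the Hensel prerequisite and the reduction to finite extensions, at the cost of having to promote a merely power-multiplicative norm to a multiplicative one over a complete base; your route gives an immediately multiplicative candidate and a conceptually transparent integral-closure argument, at the cost of needing a clean Hensel's lemma for complete nonarchimedean fields and the separate finite-to-algebraic reduction.
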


\begin{proof}
See \cite[3.2.4/2]{BGR} for the informal proof and \cite[\S3.2]{deF23} for a discussion of the \lean[\empty] formalization.
\end{proof}
In the setting that we are considering, the norm whose existence is guaranteed by Theorem~\ref{th:uniq} is formalized in the declaration \code{discrete_norm_extension}\href{https://github.com/mariainesdff/local_fields_journal/blob/7ac213eb804fe7945468023527a0fe26ab23b3c8/src/discrete_valuation_ring/discrete_norm.lean#L123}{\extlink}.

Since the extension $L/K$ is algebraic, there is an explicit formula for the norm $\lvert x\rvert_L$ of an element $x \in L$: if $x$ has minimal polynomial $f_x(X) = X^m + a_{m-1}X^{m-1} + \dots + a_0$, then $\lvert x\rvert _L = \lvert a_0\rvert _K^{1/m}$. We prove this in\href{https://github.com/mariainesdff/local_fields_journal/blob/7ac213eb804fe7945468023527a0fe26ab23b3c8/src/spectral_norm.lean#L171}{\extlink}:

\begin{lstlisting}
theorem spectral_norm_eq_root_zero_coeff :
  spectral_norm K L x = $\|$(minpoly K x).coeff 0$\|$ ^ (1/(minpoly K x).nat_degree : $\RR$)
\end{lstlisting}

Since for every $x \in L$, the degree of the minimal polynomial $f_x$ of $x$ over $K$ divides the degree $[L : K]$ of the extension $L/K$\href{https://github.com/mariainesdff/local_fields_journal/blob/7ac213eb804fe7945468023527a0fe26ab23b3c8/src/for_mathlib/field_theory/minpoly/is_integrally_closed.lean#L31}{\extlink}, we see that the norm $\lvert\cdot\rvert_L$ takes values in the subset $n^{\frac{\ZZ}{[L : K]}} \cup \{0\}$ of $\RR_{\ge 0}$. The norm needs not surject onto that subset but, since $\ZZ$ is cyclic, its image is of the form $n^{\frac{b\ZZ}{[L : K]}} \cup \{0\}$ for some $b \in \ZZ_{>0}$. Informally speaking, to obtain the corresponding normalized valuation we just need to rescale the norm by raising it to the $([L:K]/b)$-th power, and use the identification between $n^\ZZ \cup \{0\}$ and $\wzmZZ$.

In order to formalize this idea, we have to be careful to proceed in such a way that we do not leave the type $\wzmZZ$, since taking the $b^{\text{th}}$-root of an element of $\wzmZZ$ is not a well-defined operation. 
We start by constructing a map\href{https://github.com/mariainesdff/local_fields_journal/blob/7ac213eb804fe7945468023527a0fe26ab23b3c8/src/discrete_valuation_ring/extensions.lean#L107}{\extlink}
\begin{lstlisting}
 pow_extension_on_units : L$^\times\to$ multiplicative $\ZZ$
\end{lstlisting}
that sends $x \in L^\times$ to $v(a_0)^\frac{[L : K]}{\deg f_x}$, which is well-defined by lemma \code{unit_pow_ne_zero}\href{https://github.com/mariainesdff/local_fields_journal/blob/7ac213eb804fe7945468023527a0fe26ab23b3c8/src/for_mathlib/ring_theory/valuation/minpoly.lean#L45}{\extlink}. Then we find the positive number \code{exp_extension_on_units}\href{https://github.com/mariainesdff/local_fields_journal/blob/7ac213eb804fe7945468023527a0fe26ab23b3c8/src/discrete_valuation_ring/extensions.lean#L127}{\extlink}, denoted $b$ above, such that the image of \code{pow_extension_on_units} is generated by \code{of_add (exp_extension_on_units} $ \colon \ZZ$ \code{)}. Hence, using the lemma \code{exists_mul_exp_extension_on_units}\href{https://github.com/mariainesdff/local_fields_journal/blob/7ac213eb804fe7945468023527a0fe26ab23b3c8/src/discrete_valuation_ring/extensions.lean#L193}{\extlink}, we can obtain the natural number $c$ such that 
\begin{lstlisting}
pow_extension_on_units K L x = 
  (of_add (exp_extension_on_units))^c;
\end{lstlisting} which leads to our definition of the extended valuation\href{https://github.com/mariainesdff/local_fields_journal/blob/7ac213eb804fe7945468023527a0fe26ab23b3c8/src/discrete_valuation_ring/extensions.lean#L246}{\extlink}:
\begin{lstlisting}
def extension_def : L $\to \wzmZZ$ := $\lambda$ x, 
  if hx : x = 0 then 0 else (of_add (-1 : $\ZZ$)) ^ 
    (exists_mul_exp_extension_on_units K 
      (is_unit_iff_ne_zero.mpr hx).unit).some
\end{lstlisting}

To connect the extended norm and the extended valuation we prove in \code{pow_eq_pow_root_zero_coeff}\href{https://github.com/mariainesdff/local_fields_journal/blob/7ac213eb804fe7945468023527a0fe26ab23b3c8/src/discrete_valuation_ring/discrete_norm.lean#L158}{\extlink} that for every multiple $d$ of $\deg f_x$, the following equality holds:
\begin{equation}\label{eq:pow_eq_pow_root}
\mcode{of\_add}
\Bigl(\log_n\bigl(\lvert x\rvert_L^{d}\bigr)\Bigr)=v(a_0)^\frac{d}{\deg f_x}
\end{equation}
Using~\eqref{eq:pow_eq_pow_root} we deduce that \code{extension_def K L} satisfies the properties in Definition~\ref{def:valuation} since \code{spectral_norm K L} is a multiplicative nonarchimedean norm on $L$ (as proven in \cite[\S3.2]{deF23}). We then prove that this valuation is discrete\href{https://github.com/mariainesdff/local_fields_journal/blob/7ac213eb804fe7945468023527a0fe26ab23b3c8/src/discrete_valuation_ring/extensions.lean#L447}{\extlink} and that  $L$ is complete with respect to the induced uniform structure\href{https://github.com/mariainesdff/local_fields_journal/blob/7ac213eb804fe7945468023527a0fe26ab23b3c8/src/discrete_valuation_ring/extensions.lean#L561}{\extlink}. 

The next theorem we prove says that the integral closure $\rint{L}$ of the unit ball $\ball{K}$ inside $L$ is again a discrete valuation ring\href{https://github.com/mariainesdff/local_fields_journal/blob/7ac213eb804fe7945468023527a0fe26ab23b3c8/src/discrete_valuation_ring/extensions.lean#L616}{\extlink} (we refer to~\cite[Chapitre~5, \S2, n$^\circ$1]{Bou85} for generalities about integral closures). This follows from the fact that $\rint{L}$ is actually equal to the unit ball with respect to the extended discrete valuation on $L$, formalized as\href{https://github.com/mariainesdff/local_fields_journal/blob/7ac213eb804fe7945468023527a0fe26ab23b3c8/src/discrete_valuation_ring/extensions.lean#L588}{\extlink}
\begin{lstlisting}
lemma integral_closure_eq_integer :
  integral_closure hv.v.valuation_subring L =
  (extended_valuation K L).valuation_subring
\end{lstlisting}

\section{Local Fields}\label{sec:local_fields}
In this section we describe our formalization of the basic theory of local fields, a special kind of discretely valued fields which are fundamental objects of study in number theory. For instance, given a number field $L$ with ring of integers $\rint{L}$, one can subsequently apply the constructions of Example~\ref{example:adic_valuation} to the different maximal ideals $\maxid[\empty]\subseteq\rint{L}$ --- thus obtaining a collection of DVR's --- followed by the completion procedure described in \S\ref{subsec:complete_DVR}. The resulting collection of fields $\compl[{\maxid[\empty]}]{L}$ are precisely the local fields occurring in the ``local-to-global'' approach to class field theory briefly mentioned in the Introduction.

Before giving the definition, we make a preliminary observation. Theorem~\ref{thm:dvr_tfae} shows that every DVR is a local ring, so it has a unique maximal ideal. On the other hand, given a field $K$ endowed with a discrete valuation, the \LClistingname~\ref{code:serre_prop_1} shows that its unit ball $\ball{K}$ is a DVR, whose maximal ideal we denote $\maxid[\ball{K}]$. It is then possible to unambiguously (although slightly inappropriately) speak of the \emph{residue field} of $K$ to mean the field $\ball{K}/\maxid[\ball{K}]$.

\begin{definition}\label{def:local_field}
A \emph{local field} is a field complete with respect to a discrete valuation and with finite residue field.
\end{definition}
The definition is implemented as follows:\href{https://github.com/mariainesdff/local_fields_journal/blob/0b408ff3af36e18f991f9d4cb87be3603cfc3fc3/src/local_field.lean#L35}{\extlink}
\begin{lstlisting}
class local_field (K : Type*) [field K] 
  [hv : valued K $\wzmZZ$] :=
(complete : complete_space K)
(is_discrete : is_discrete hv.v)
(finite_residue_field : fintype (local_ring.residue_field hv.v.valuation_subring))
\end{lstlisting}

\begin{remark}
What we call ``local fields'' are normally referred to as ``nonarchimedean local fields'', to distinguish them from the archimedean local fields $\RR$ and $\CC$. Since we only consider the nonarchimedean case in this work, we have opted for this simplification. Observe that the requirement that the residue field is finite is sometimes omitted. Accordingly, the completion $\FpXCompl[\CC]$ is a valued field with residue field $\CC$ that we do not qualify to be local, although this is sometimes the case in the literature.
\end{remark}

A local field $K$ can be of two kinds: of \emph{equal characteristic}, if both $K$ and its residue field have characteristic $p$ for some prime number $p$, or of \emph{mixed characteristic}, if $K$ has characteristic $0$ and its residue field has characteristic $p$. Moreover, in the first case $K$ is a finite extension of $\laurentseries$ while in the second it is a finite extension of $\QQ_p$ (see~\cite[Chapitre~VI,~\S9, n$^\circ$3]{Bou85}); this motivates our definitions in~\S\S\ref{subsec:eq_char}--\ref{subsec:mixed_char}.

\subsection{Equal characteristic}\label{subsec:eq_char}
Recall that we denote by $\FpXCompl$ the completion of $\FF(X)$ with respect to the maximal ideal $\maxid[\empty]=(X)$.
\begin{definition}\label{def:eq_char_local_field}
An \emph{equal characteristic local field} is a finite dimensional field extension of \FpXCompl, for some prime number $p$.
\end{definition}
This definition is formalized as\href{https://github.com/mariainesdff/local_fields_journal/blob/0b408ff3af36e18f991f9d4cb87be3603cfc3fc3/src/eq_characteristic/basic.lean#L359}{\extlink}:
\begin{lstlisting}
class eq_char_local_field (p : $\NN$) [nat.prime p]
  (K : Type*) [field K] 
  extends algebra (FpX_completion p) K :=
[to_finite_dimensional : finite_dimensional 
  (FpX_completion p) K]
\end{lstlisting}

The first example of equal characteristic local field is $\FpXCompl$ itself, and we record our proof of this in the instance \code{FpX_completion.eq_char_local_field}\href{https://github.com/mariainesdff/local_fields_journal/blob/0b408ff3af36e18f991f9d4cb87be3603cfc3fc3/src/eq_characteristic/basic.lean#L440}{\extlink}. Being a finite extension of the complete discretely valued field $\FpXCompl$, any equal characteristic  local field $K$ is endowed with a unique nontrivial valuation, which is again discrete, and $K$ is complete with respect to it.
\begin{lstlisting}
instance : valued K $\wzmZZ$ := 
extension.valued (FpX_completion p) K
instance : complete_space K := 
extension.complete_space (FpX_completion p) K
instance : is_discrete 
  (eq_char_local_field.valued p K).v := 
extension.is_discrete_of_finite 
  (FpX_completion p) K
\end{lstlisting}

\begin{remark}\label{rmk:laurent_vs_adic}
Given any field $K$, the $X$-adic completion $\FpXCompl[K]$ of the field of rational functions is isomorphic to the field of Laurent series indexed by $\ZZ$ with only finitely many negative nonzero coefficients:
\[
\laurentseries[K]=\Bigl\{f=\sum_{n \in\ZZ}a_nX^n \mid a_n \in K\text{ and }a_n = 0 \text{ if }n\ll 0\Bigr\}.
\]
Accordingly, the unit ball $\FpXComplInt[K]$ is isomorphic to the ring of power series
\[
\powerseries[K]=\Bigl\{\sum a_nX^n \in \laurentseries[K]\mid a_n=0 \,\forall\,n\le 0, \Bigr\}.
\]

It is customary to define an equal characteristic local field as a finite extension of $\laurentseries$ rather than of $\FpXCompl$, because elements in $\laurentseries$ are explicit and it might be handier to work with them than with elements in $\FpXCompl$. While in pen-and-paper mathematics one can safely treat finite extensions of the two fields as leading to the same definition, we have to pick a choice in our formalization project. Since our development of extensions of complete DVR is very general and does not rely on any explicit description of the base field, and in analogy with our approach to the mixed characteristic case (see~\S\ref{subsec:mixed_char}), we decide to use the type $\FpXCompl$ as the base field in our Definition~\ref{def:mixed_char_local_field}. Moreover, the isomorphism $\laurentseries[K]\cong \FpXCompl[K]$ is still not available in \mathlib and we describe our formalization of this isomorphism in~\S\ref{subsubsection:isom_laurent}.
\end{remark}

Our next task is to show that an equal characteristic local field $K$ is a local field, in the sense of Definition~\ref{def:local_field}. We have formalized this proof under the hypothesis that the extension $K/\FpXCompl$ is separable, as this is required to apply the \mathlib lemma \code{is_integral_closure.is_noetherian}\href{https://leanprover-community.github.io/mathlib_docs/ring_theory/dedekind_domain/integral_closure.html#is_integral_closure.is_noetherian}{\extlink} which we are using in our proof of the finiteness of the residue field. However, we point out that the separability assumption can be removed at the expense of a more involved proof, which we plan to formalize at a later date.

Since $\FpXCompl$ is a field complete with respect to a discrete valuation and $K$ is a finite extension of $\FpXCompl$, by the discussion following~\eqref{eq:pow_eq_pow_root}, the $X$-adic valuation on $\FpXCompl$ induces a complete\href{https://github.com/mariainesdff/local_fields_journal/blob/0b408ff3af36e18f991f9d4cb87be3603cfc3fc3/src/eq_characteristic/valuation.lean#L57}{\extlink} discretely\href{https://github.com/mariainesdff/local_fields_journal/blob/0b408ff3af36e18f991f9d4cb87be3603cfc3fc3/src/eq_characteristic/valuation.lean#L60}{\extlink} \code{valued} structure on $K$, registered in the instance \code{eq_char_local_field.valued p K}\href{https://github.com/mariainesdff/local_fields_journal/blob/0b408ff3af36e18f991f9d4cb87be3603cfc3fc3/src/eq_characteristic/valuation.lean#L54}{\extlink}. Hence it only remains to prove that the residue field of $K$ is finite. To prove the finiteness statement, we first show that if $E$ is a field complete with respect to a discrete valuation and $L/E$ is a finite separable field extension, then the residue field of $L$ is finite dimensional over the residue field of $E$\href{https://github.com/mariainesdff/local_fields_journal/blob/0b408ff3af36e18f991f9d4cb87be3603cfc3fc3/src/discrete_valuation_ring/residue_field.lean#L379}{\extlink}:
\begin{lstlisting}
  finite_dimensional (residue_field E$_0$) 
    (residue_field (integral_closure E$_0$ L))
\end{lstlisting}
This implies that if the residue field of $E$ is finite, then so is the residue field of $L$. Now, it follows from the isomorphism $\laurentseries\cong\FpXCompl$ discussed in~\S\ref{subsubsection:isom_laurent} that $E=\FpXCompl$ has residue field $\FF$, so every equal characteristic local field has finite residue field, and is therefore a local field.

The \emph{ring of integers} of an equal characteristic local field $K$, denoted $\rint{K}$, is the integral closure of $\FpXComplInt$ in $K$\href{https://github.com/mariainesdff/local_fields_journal/blob/0b408ff3af36e18f991f9d4cb87be3603cfc3fc3/src/eq_characteristic/basic.lean#L375}{\extlink}:
\begin{lstlisting}
def ring_of_integers := 
integral_closure (FpX_int_completion p) K
\end{lstlisting}
The lemma \code{integral_closure_eq_integer} implies that $\rint{K} = \ball{K}$, that is, an element of $K$ is integral over $\FpXComplInt$ if and only if its valuation is less than or equal to $1$. A pivotal consequence of this equality is that $\rint{K}$ is a discrete valuation ring\href{https://github.com/mariainesdff/local_fields_journal/blob/0b408ff3af36e18f991f9d4cb87be3603cfc3fc3/src/eq_characteristic/valuation.lean#L64}{\extlink}:

\begin{lstlisting}
instance : discrete_valuation_ring ($\rint{}$ p K) := 
integral_closure.dvr_of_finite_extension 
  (FpX_completion p) K
\end{lstlisting}

\subsubsection{Formalizing the isomorphism $\laurentseries[K]\cong\FpXCompl[K]$}\label{subsubsection:isom_laurent}

Due to the relatively recent appearance of the theory of adic valuations in \mathlib, the isomorphism $\laurentseries[K]\cong\FpXCompl[K]$ was not formalized at the time of our work. Nevertheless the API for working with completions of uniform spaces and uniform fields is quite rich, as described in~\S\ref{subsec:complete_DVR}. The key ingredient for the formalization is the notion of \emph{abstract completion}\href{https://leanprover-community.github.io/mathlib_docs/topology/uniform_space/abstract_completion.html#abstract_completion}{\extlink} of uniform spaces. Given a uniform space~$T$, a term \code{pkg : abstract_completion T} contains seven fields, the three most relevant for us being
\begin{listDef}
\item \code{pkg.space}, the underlying uniform space endowed with a map \code{coe : pkg} $\to$ \code{T};
\item \code{pkg.complete}, which is a proof that \code{pkg.space} is complete;
\item \code{pkg.dense_coe}, which is a proof that \code{coe} is injective with dense image.
\end{listDef}
In particular, there is a term \code{ratfunc_adic_compl_pkg}\href{https://github.com/mariainesdff/local_fields_journal/blob/0b408ff3af36e18f991f9d4cb87be3603cfc3fc3/src/laurent_series_equiv_adic_completion.lean#L785}{\extlink} whose \code{space} field represents the completion $\FpXCompl$. Now, given two terms \code{(pkg, pkg': abstract_completion T)}, the declaration \code{compare pkg pkg'}\href{https://leanprover-community.github.io/mathlib_docs/topology/uniform_space/abstract_completion.html#abstract_completion.compare}{\extlink} provides an equivalence of uniform spaces between them, which expresses the mathematical statement that ``the completion of a uniform space is unique up to a unique isomorphism''. The uniqueness of the extension~\eqref{eq:univ_compl} makes it easy to upgrade the equivalence to an equivalence of fields whenever both \code{pkg.space} and \code{pkg'.space} are fields. It follows that once we prove that $\laurentseries[K]$ is also an abstract completion of $K(X)$, it will correspond to a term \code{laurent_series_pkg}\href{https://github.com/mariainesdff/local_fields_journal/blob/0b408ff3af36e18f991f9d4cb87be3603cfc3fc3/src/laurent_series_equiv_adic_completion.lean#L790}{\extlink} and the previous discussion produces the required isomorphism, in the form\href{https://github.com/mariainesdff/local_fields_journal/blob/0b408ff3af36e18f991f9d4cb87be3603cfc3fc3/src/laurent_series_equiv_adic_completion.lean#L841}{\extlink}
\begin{lstlisting}[caption={The isomorphim between the Laurent series and the completion of rational functions.}, label={code:isom_ls}]
def laurent_series_ring_equiv :
   (laurent_series K) $\simeq ^{+*}$ (ratfunc_adic_compl K)
\end{lstlisting}

To prove that $\laurentseries[K]$ is a completion of $K(X)$, we need to show that $\laurentseries[K]$ is complete and that the image of the coercion $K(X)\hookrightarrow \laurentseries[K]$ is dense. Ultimately, both results rely on a careful study of the interaction between the $X$-adic valuation on $K(X)$, the $X$-adic valuation on $\laurentseries[K]$\footnote{Despite having the same name, these valuations are associated to different ideals: one is $(X)\subseteq K[X]$ and the other is $(X)\subseteq \powerseries[K]$.}, and the coefficients of the corresponding series. For example, we show in the lemma\href{https://github.com/mariainesdff/local_fields_journal/blob/0b408ff3af36e18f991f9d4cb87be3603cfc3fc3/src/laurent_series_equiv_adic_completion.lean#L372}{\extlink}
\begin{lstlisting}
lemma valuation_le_iff_coeff_zero_of_lt {D : $\ZZ$}
  {f : laurent_series K} : v f $\le$ (of_add (-D))
  $\leftrightarrow$ ($\forall$ n : $\ZZ$, n < D $\to$ f.coeff n = 0)
\end{lstlisting}
that a Laurent series has valuation bounded by \code{of_add (-D)} if and only if its $n$-th coefficient vanishes for each $n<D$. Once we can relate the valuation and the vanishing of the coefficients, the proof that $K(X)$ has dense image in $\laurentseries[K]$ is very smooth. The proof of completeness heavily relies on the formalism of filters, as explained in~\cite[\S4]{BuzComMas20}. The main ingredient is the\href{https://github.com/mariainesdff/local_fields_journal/blob/0b408ff3af36e18f991f9d4cb87be3603cfc3fc3/src/laurent_series_equiv_adic_completion.lean#L490}{\extlink}
\begin{lstlisting}
lemma uniform_continuous_coeff (d : $\ZZ$)
  (h : uniformity K = $\princfilter$ id_rel) :
  uniform_continuous (coeff d)
\end{lstlisting}
It shows that if $K$ is endowed with the \emph{discrete} uniformity (\cite[Chapitre~II, \S1, n$^\circ$1, Exemple~2]{Bou71}), the map $f \mapsto a_d(f)$ sending a Laurent series to its $d$-th coefficient is uniformly continuous, for every $d\in\ZZ$. The consequence (see \cite[Chapitre~II, \S3, n$^\circ$1, Proposition~3]{Bou71}) is that for every Cauchy filter $\filter$ in $\laurentseries[K]$, the push-forward $f(\filter)$ is a Cauchy filter of the discrete space $K$ and thus converges to a point $c_d(\filter)$. It is then easy to combine the above results linking coefficients and valuation to show that, for every Cauchy filter $\filter$, the value $c_d(\filter)$ vanishes for $d\ll 0$ and therefore
\[
f(\filter)=\sum_{d\in\ZZ} c_dX^d\in\laurentseries[K].
\]
In the lemma \code{cauchy.eventually_mem_nhds}\href{https://github.com/mariainesdff/local_fields_journal/blob/0b408ff3af36e18f991f9d4cb87be3603cfc3fc3/src/laurent_series_equiv_adic_completion.lean#L630}{\extlink} we then show that $\filter$ converges to the principal filter $\princfilter(f(\filter))$, proving that $\laurentseries[K]$ is complete. Finally, we can specialize to $\powerseries[K]$ the equivalence \code{laurent_series_ring_equiv}
to get its integral version:\href{https://github.com/mariainesdff/local_fields_journal/blob/0b408ff3af36e18f991f9d4cb87be3603cfc3fc3/src/laurent_series_equiv_adic_completion.lean#L983}{\extlink}
\begin{lstlisting}[caption={The isomorphism between power series and the unit ball in the completion of rational functions.}, label={code:iso_ps}]
def power_series_ring_equiv : (power_series K) 
  $\simeq ^{+*}$ ((ideal_X K).adic_completion_integers 
        (ratfunc K))
\end{lstlisting}
In particular, we see that the residue field of $\FpXCompl[K]$ is isomorphic to the residue field of $\laurentseries[K]$, hence to $K$ itself. The special case when $K=\FF$ yields the finiteness of the residue field of $\FpXCompl$ mentioned in~\S\ref{subsec:eq_char}.
\subsection{Mixed characteristic}\label{subsec:mixed_char}
The main formalization challenge we face when formalizing the definition of mixed characteristic local fields is analogous to the issue discussed in Remark~\ref{rmk:laurent_vs_adic}. The basic API for the $p$-adic numbers $\QQ_p$ was already available in \mathlib, but it predated the formalization of adic valuations. Since we want to take advantage of this more general theory, our approach is to define a new type $\padicCompl$\href{https://github.com/mariainesdff/local_fields_journal/blob/0b408ff3af36e18f991f9d4cb87be3603cfc3fc3/src/padic_compare.lean#L95}{\extlink}
\begin{lstlisting}
def Q_p : Type* := 
adic_completion $\QQ$ (p_height_one_ideal p)
\end{lstlisting}
and to prove that it is isomorphic, as a valued field, to the field $\QQ_p$. This isomorphism is established in the definition \code{padic_equiv}\href{https://github.com/mariainesdff/local_fields_journal/blob/0b408ff3af36e18f991f9d4cb87be3603cfc3fc3/src/padic_compare.lean#L274}{\extlink} and its construction follows the main strategy explained in~\S\ref{subsubsection:isom_laurent}. Namely, we provide two abstract completions \code{padic_pkg} and \code{padic_pkg'} of $\QQ$ and we upgrade the equivalence as uniform spaces to an isomorphism of valued fields. As a consequence, the unit ball $\padicComplInt$, called\href{https://github.com/mariainesdff/local_fields_journal/blob/0b408ff3af36e18f991f9d4cb87be3603cfc3fc3/src/padic_compare.lean#L332}{\extlink} \code{(Z_p p)} in our code,
is proved to be isomorphic to the $p$-adic integers $\ZZ_p$ in the declaration\href{https://github.com/mariainesdff/local_fields_journal/blob/0b408ff3af36e18f991f9d4cb87be3603cfc3fc3/src/padic_compare.lean#L573}{\extlink}
\begin{lstlisting}[caption={The isomorphism between $\padicComplInt$ and $\ZZ_p$.}, label={code:iso_Zp}]
def padic_int_ring_equiv : (Z_p p) $\simeq $+* $\ZZ$_[p]
\end{lstlisting}

We are now ready to give the following definition:
\begin{definition}\label{def:mixed_char_local_field}
A \emph{mixed characteristic local field} is a finite dimensional field extension of the field $\padicCompl$, for some $p$.
\end{definition}
This is implemented as\href{https://github.com/mariainesdff/local_fields_journal/blob/0b408ff3af36e18f991f9d4cb87be3603cfc3fc3/src/mixed_characteristic/basic.lean#L41}{\extlink}
\begin{lstlisting}
class mixed_char_local_field (p : $\NN$)
  [nat.prime p] (K : Type*) [field K]
  extends algebra (Q_p p) K :=
[to_finite_dimensional : finite_dimensional
  (Q_p p) K]    
\end{lstlisting}
In particular, $\padicCompl$ is a mixed characteristic local field\href{https://github.com/mariainesdff/local_fields_journal/blob/0b408ff3af36e18f991f9d4cb87be3603cfc3fc3/src/mixed_characteristic/basic.lean#L128}{\extlink}.

We formalize the proof that any mixed characteristic local field $K$ is a local field as in Definition~\ref{def:local_field}. The proof is analogous to the one in the equal characteristic case: the only  difference is that every mixed characteristic local field $K$ is automatically separable over $\padicCompl$, since this holds for every algebraic extension of a field of characteristic $0$. Hence, we do not need to assume separability. Finally, the ring isomorphism in the \LClistingname~\ref{code:iso_Zp} implies that the residue field of $\padicComplInt$ is isomorphic to $\FF$, since this is the case for $\ZZ_p$. As above, this ensures that every mixed characteristic local field is indeed a local field according to Definition~\ref{def:local_field}:\href{https://github.com/mariainesdff/local_fields_journal/blob/0b408ff3af36e18f991f9d4cb87be3603cfc3fc3/src/local_field.lean#L74}{\extlink}
\begin{lstlisting}
def mixed_char_local_field.local_field : local_field K := 
{ complete := mixed_char_local_field.complete_space p K,
  is_discrete := v.valuation.is_discrete p K,
  finite_residue_field := ...,
  ..(mixed_char_local_field.valued p K) }
\end{lstlisting} 

The extension of the $p$-adic valuation to $K$\href{https://github.com/mariainesdff/local_fields_journal/blob/0b408ff3af36e18f991f9d4cb87be3603cfc3fc3/src/mixed_characteristic/valuation.lean#L67}{\extlink} is the unique nontrivial discrete\href{https://github.com/mariainesdff/local_fields_journal/blob/0b408ff3af36e18f991f9d4cb87be3603cfc3fc3/src/mixed_characteristic/valuation.lean#L73}{\extlink} valuation on $K$, and the field $K$ is complete with respect to the induced topology\href{https://github.com/mariainesdff/local_fields_journal/blob/0b408ff3af36e18f991f9d4cb87be3603cfc3fc3/src/mixed_characteristic/valuation.lean#L70}{\extlink}.

The \emph{ring of integers} $\rint{K}$ of a mixed characteristic local field $K$ is the integral closure of $\ball{(\padicCompl)}$ in $K$\href{https://github.com/mariainesdff/local_fields_journal/blob/0b408ff3af36e18f991f9d4cb87be3603cfc3fc3/src/mixed_characteristic/basic.lean#L62}{\extlink}. 
\begin{lstlisting}
def ring_of_integers :=
integral_closure (Z_p p) K
\end{lstlisting}

As in the equal characteristic case, we have that $\rint{K} = \ball{K}$, so in particular $\rint{K}$ is a discrete valuation ring\href{https://github.com/mariainesdff/local_fields_journal/blob/0b408ff3af36e18f991f9d4cb87be3603cfc3fc3/src/mixed_characteristic/valuation.lean#L77}{\extlink}.

\section{Discussion}\label{sec:conclusion}

\subsection{Future work}\label{subsec:future_work}

Our next project is to prove that every local field is either a mixed characteristic local field or an equal characteristic local field, and that completions of global fields at finite places are local fields. We will then relate unramified extensions of local fields with extensions of their residue fields, showing that they are all (pro-)cyclic. This paper describes one of the first steps in a larger scale project aiming at formalizing local class field theory. We plan to formulate it in cohomological terms, relying on the recent work~\cite{Liv23} by Livingston.

\subsection{Related works}\label{subsec:related_work}
We formalize our work on top of the \lean library \mathlib and the \lean project \cite{deF23} formalizing extensions of norms. The basic theory of DVR's was available in \mathlib at the start of our project. The library includes a formalization of the additive valuation on a DVR but we discuss in~\S\ref{subsec:implementation} our choice of working with $\wzmZZ$-valued valuations instead.

As far as other systems are concerned, the first formalization of the $p$-adic numbers appeared in the Coq UniMath library in~\cite{padicsCoq}. Beyond the choice of the theorem prover, the two major differences with our work come from different axiomatization settings: the authors of~\cite{padicsCoq} assume the univalence axiom and work in a constructive setting, replacing the notion of being non-zero with a property of being ``apart from zero''. As a consequence, their construction of the $p$-adic integers $\ZZ_p$ follows a completely different path and involves a ``carrying'' operator on the ring $\powerseries[\ZZ]$ and an equivalence relation based on univalent ``paths'' rather than equality. The field $\QQ_p$ is then defined as the Heyting field of fractions of $\ZZ_p$, a construction mimicking the usual field of fractions of an integral domain but in the setting of apartness domains. No treatment of the $p$-adic valuation, of the $p$-adic metric and more generally of topological properties is presented \ibid. From the algebraic point of view, neither the properties of being a DVR, a Dedekind domain, or a local ring, nor algebraic extensions of $\QQ_p$, are approached.

The $p$-adic numbers were later formalized in Isabelle/HOL, whose main library also contains a formalization of formal Puiseux series. The formalization of $\ZZ_p$ in~\cite{padicIntIsa} follows the classical path and is quite complete: both the definition as inverse limit and as completion of $\ZZ$ with respect to the $p$-adic valuation are formalized, together with a proof of their equivalence and of Hensel's lemma. The basic results of the topological properties of $\ZZ_p$ can also be found \ibid, and the paper~\cite{padicIsa} contains deeper results: the field $\QQ_p$ is defined as the fraction field of the ring $\ZZ_p$ formalized in the previous work, and it is endowed with both a valuation and a norm extending the previous ones on $\ZZ_p$. Again, the main topological properties both of $\QQ_p$ and of $\QQ_p^n$ are studied. Nevertheless, neither the structure of $\ZZ_p$ as a DVR, or as a Dedekind domain, are addressed; also, there is no treatment of finite extensions of $\QQ_p$ or of localization results.

The work~\cite{PuiseuxSeriesIsa} presents the formalization of the ring $\puiseuxseries$ of Puiseux series (over any commutative ring $R$), that are a generalization of Laurent series. Both $\powerseries[R]$ and $\laurentseries[R]$ are defined and studied \ibid, and are endowed with a valuation and a norm, whose basic properties are formalized and extended to
\[
\puiseuxseries=\bigcup_{d\geq 1}R(\!(X^{1/d})\!).
\]
The main focus of the paper is the formalization of the Newton--Puiseux' theorem stating that whenever $C$ is an algebraically closed field of characteristic $0$, the same holds for $\puiseuxseries[C]$; as a consequence, few algebraic properties both of $\laurentseries[\FF]$ and of $\puiseuxseries[\FF]$ are formalized. In particular, no formalization of the DVR structure of $\powerseries[\FF]$ or of field extensions of $\laurentseries[\FF]$ can be found \ibid.

\subsection{Remarks about the Implementation}\label{subsec:implementation}

\paragraph{Additive and multiplicative valuations}
The \mathlib library prioritizes multiplicative valuations over additive ones, providing a much wider API for the first ones. For example, the general theory of valuation rings\href{https://leanprover-community.github.io/mathlib_docs/ring_theory/valuation/valuation_ring.html#valuation_ring.valuation}{\extlink} is framed in terms of multiplicative valuations. Moreover, while \mathlib does not yet include the definition of discrete valuation, it does provide specific examples, such as adic valuations on Dedekind domains, which take values in $\wzmZZ$.

On the other hand, \mathlib only provides the formalization of the \emph{additive} valuation \code{add_val}\href{https://leanprover-community.github.io/mathlib_docs/ring_theory/discrete_valuation_ring/basic.html#discrete_valuation_ring.add_val}{\extlink} on a discrete valuation ring, taking values in \code{part_enat}, a decidable version of the type \code{enat}=\,$\NN\mcode{\cup \,\{\infty\}}$. 
For consistency with the rest of the library, we propose to replace \code{add_val} with our implementation of the multiplicative valuation.

\paragraph{Uniformizers}
To indicate that a term \code{(}$\pi$\code{ : R)} in a ring $R$ is a uniformizer for a valuation $v_R$, we provide a predicate \code{is_uniformizer}\href{https://github.com/mariainesdff/local_fields_journal/blob/0b408ff3af36e18f991f9d4cb87be3603cfc3fc3/src/discrete_valuation_ring/basic.lean#L137}{\extlink}:
\begin{lstlisting}
def is_uniformizer ($\pi$ : R) : Prop := 
vR $\pi$ = (of_add (- 1 : $\ZZ$) : $\wzmZZ$)
\end{lstlisting}

We also provide a bundled version of this definition, called \code{uniformizer}\href{https://github.com/mariainesdff/local_fields_journal/blob/0b408ff3af36e18f991f9d4cb87be3603cfc3fc3/src/discrete_valuation_ring/basic.lean#L148}{\extlink}. That is, we provide a structure \code{uniformizer}\relax whose terms consists of two fields: an element $\pi$ of the ring~$R$, together with the proof that $\pi$ is a uniformizer for a given valuation. Since any uniformizer is necessarily a member of the unit ball, we decided to make the field $\pi$ in this definition a term of type \code{vR.integer} (as opposed to type \code{R}).
\begin{lstlisting}
structure uniformizer :=
($\pi$ : vR.integer)
(valuation_eq_neg_one : is_uniformizer vR $\pi$)
\end{lstlisting}

The proposition \code{is_uniformizer} is useful when proving lemmas that apply to any uniformizer element, such as the lemma stating that every uniformizer is nonzero\href{https://github.com/mariainesdff/local_fields_journal/blob/0b408ff3af36e18f991f9d4cb87be3603cfc3fc3/src/discrete_valuation_ring/basic.lean#L178}{\extlink}:
\begin{lstlisting}
lemma uniformizer_ne_zero {$\pi$ : R}
  (h$\pi$ : is_uniformizer vR $\pi$) : $\pi \ne 0$ := 
\end{lstlisting}
By contrast, the bundled definition is more convenient to prove results that involve fixing a uniformizer. For example, we use it when proving that every nonzero $r : \ball{K}$ can be factored as the product of a unit by a power of a fixed uniformizer\href{https://github.com/mariainesdff/local_fields_journal/blob/0b408ff3af36e18f991f9d4cb87be3603cfc3fc3/src/discrete_valuation_ring/basic.lean#L259}{\extlink}:
\begin{lstlisting}
lemma pow_uniformizer {r : $K_0$} (hr : r $\ne$ 0)
  ($\pi$ : uniformizer v) :
  $\exists$ n : $\mathbb{N}$, $\exists$ u : $K_0^\times$, r = $\pi$.1^n * u :=
\end{lstlisting}

For an in-depth discussion of bundled versus unbundled representations in Lean and in Coq, we refer the reader to \cite{Baa22} or \cite{typeclassesCoq}.

\paragraph{Extensions and \code{valued} instances}
In \S\ref{subsec:extensions} we construct, for each field $K$ complete with respect to a discrete valuation~$v$ and every finite extension $L/K$, a unique valuation $v_L$ on $L$, recorded as \code{(v_L : valuation L}$\;\wzmZZ)$. However, at this level of generality we do not put a global \code{valued} instance on $L$: doing so would create infinitely many ``diamonds'', by which we mean the existence of two different procedures to endow an object with a certain mathematical structure (represented by two terms of the given structure that are propositionally, but not definitionally, equal). This leads to problems when exploiting the full force of the typeclass inference mechanism: indeed, suppose that in a given context $\Gamma$, a term $t:T$ is well-typed assuming the existence of a term \code{s:S} for a certain structure \code{S}. Given two terms \code{s}$_1$,\code{s}$_2$\code{:S} that are propositionally equal, the corresponding terms \code{t}$_1$,\code{t}$_2$\code{:T} are different, yet carry the same mathematical information. Now, if the type-class inference mechanism can produce both terms \code{s}$_1$ and \code{s}$_2$ --- say, inside a proof --- the user would face the problem of having two different terms that look indistinguishable, which clearly leads to unexpected problems. In our setting, for every field $K$ that is complete with respect to a discrete valuation $v$, the original valuation is propositionally but not definitionally equal to its trivial extension $v_K$, resulting in two terms of the structure of a \code{valued} field on $K$: to avoid any diamond, we need to exclude the second from the inference search, and therefore we need to avoid declaring a structure of \code{valued} field on the trivial extension $K$ of $K$: this forces us to avoid declaring such a structure on \emph{every} extension $L/K$. Nevertheless, we provide a lemma \code{trivial_extension_eq_valuation}\href{https://github.com/mariainesdff/local_fields_journal/blob/0b408ff3af36e18f991f9d4cb87be3603cfc3fc3/src/discrete_valuation_ring/trivial_extension.lean#L63}{\extlink} proving the equality $v=v_K$.

We make an exception to this rule when implementing mixed and equal characteristic local fields, since we do declare \code{valued} instances for them. This creates a mild diamond for $\padicCompl$ and $\FpXCompl$ but this does not cause trouble in our formalization, thanks to the comparison lemma \code{trivial_extension_eq_valuation}\href{https://github.com/mariainesdff/local_fields_journal/blob/0b408ff3af36e18f991f9d4cb87be3603cfc3fc3/src/discrete_valuation_ring/trivial_extension.lean#L63}{\extlink} mentioned above. Indeed, in all cases where a the inference system would create problems, the lemma allows us to disambiguate the situation: for an example, the reader can inspect the proof of \code{FpX_int_completion.equiv_valuation_subring}\href{https://github.com/mariainesdff/local_fields_journal/blob/0b408ff3af36e18f991f9d4cb87be3603cfc3fc3/src/eq_characteristic/valuation.lean#L102}{\extlink}, proving that the unit ball $\FpXComplInt$ is isomorphic to the subring of $\FpXCompl$ whose elements have $(X)$-adic valuation less than or equal to (\code{1:} $\wzmZZ$).

\paragraph{Fraction fields.}
If $K$ is a discretely valued field, then $\ball{K}$ is a DVR, so we put a \code{[valued (fraction_ring K}$_0$) $\wzmZZ$\code{]} instance on $\Frac(\ball{K})$\href{https://github.com/mariainesdff/local_fields_journal/blob/0b408ff3af36e18f991f9d4cb87be3603cfc3fc3/src/discrete_valuation_ring/basic.lean#L458}{\extlink}. Note that, while $K$ and $\Frac(\ball{K})$ are isomorphic, they are represented by different types in \mathlib, so the above \code{valued} instance has a different type from the original \code{[valued K} $\wzmZZ$\code{]} instance and no diamond occurs. However, it would if we had instead decided to put this \code{valued} instance on any field $L$ satisfying the condition \code{[is_fraction_ring K}$_0\; $\code{L]}, since this holds for $K$.

\paragraph{The \code{normed_field} and the \code{valued} instances}
Recall from~\S\ref{subsec:extensions} that to each discrete valuation $v$ on a field $K$ we can associate a nonarchimedean multiplicative norm $\lvert\cdot\rvert_K$.

When there exists a preferred discrete valuation $v$ on $K$, we often register it as a \code{valued} instance:
\begin{lstlisting}
    {K : Type*} [field K] [hv : valued K $\wzmZZ$] 
\end{lstlisting}
In this situation, the norm $\lvert\cdot\rvert_K$ associated to $v$ would be the preferred nontrivial norm on $K$ , up to rescaling, and we would like to record a corresponding \code{normed_field} instance on $K$:
\begin{lstlisting}
    {K : Type*} [normed_field K $\wzmZZ$] 
\end{lstlisting}
Note that the datum of a field is embedded into the definition of the \code{normed_field} class, while the class \code{valued} takes the field structure as an argument. It follows that declaring a \code{normed_field} instance on every field that carries a \code{valued} one leads to a loop in the typeclass inference system\footnote{\lean[4] can detect these kinds of simple loops, so this should not be an issue once our project has been ported to \lean[4].}. We discuss below a concrete example to illustrate this problem, for which we studied
a trace of the instance search using the option \code{trace.class_instances} to track the typeclass inference process.

Suppose that we had declared \code{discretely_normed_field} as an instance instead of as a definition, so that every discretely valued field
\begin{lstlisting}
(K : Type*) [field K] [hv : valued K $\wzmZZ$]
  [is_discrete hv.v]
\end{lstlisting}
would automatically inherit a \code{normed_field} instance.
Under this assumption, the typeclass inference system would get into an infinite loop when searching for an instance of
\code{valuation_ring hv.v.valuation_subring.to_subring,}\relax 
because one of the first results that Lean tries to apply is \code{valuation_ring.of_field}, that automatically infers a valuation ring structure on every field. In turn, this reduces the problem at hand to finding an instance of \code{field hv.v.valuation_subring.to_subring.}
This leads to a dead end (since the valuation subring is not a field); however, while trying to find this instance, the inference system finds \code{normed_field.to_field}\relax, and hence it starts to search for an instance of
\code{normed_field hv.v.valuation_subring.to_subring}, for which it will try to apply the definition \code{discretely_normed_field}, that as input requires a \code{field} instance on the valuation subring which recreates the problem discussed above, resulting in the system getting stuck in an infinite loop.

However, observe that no problem arises when putting both a \code{valued} and a \code{normed_field} instance on, for instance, $\QQ_p$, because the typeclass inference system is able to find these instances only on $\QQ_p$.

\paragraph{Laurent series}
As for general DVR's, an additive valuation was already available in \mathlib for power series, with the name \code{hahn_series.add_val}\href{https://leanprover-community.github.io/mathlib_docs/ring_theory/hahn_series.html#hahn_series.add_val}{\extlink}. It is \code{part_enat}-valued and it is defined as the greatest $n$ such that $X^n$ divides the power series. Although some basic API was available, the same reasons that led us to systematically work with multiplicative valuations rather than additive ones pushed us to rely on the general theory of $\wzmZZ$-valued adic valuations rather than with this \emph{ad hoc} definition.

The main isomorphism \code{laurent_series_ring_equiv} exhibited in the \LClistingname~\ref{code:isom_ls} is defined as \emph{the inverse} of an isomorphism\href{https://github.com/mariainesdff/local_fields_journal/blob/0b408ff3af36e18f991f9d4cb87be3603cfc3fc3/src/laurent_series_equiv_adic_completion.lean#L828}{\extlink}
\begin{lstlisting}
ratfunc_adic_compl_ring_equiv : $ \FpXCompl[K]\simeq ^{ +*}\laurentseries[K]$.
\end{lstlisting}
The reason is that to prove additivity and multiplicativity of the above map it suffices to observe that it coincides with the extension $\compl{\mcode{coe}}$ (in the sense of~\eqref{eq:univ_compl}) of the coercion
\begin{lstlisting}
            coe : $K(X)\rightarrow^{+*} \laurentseries[K]$.
\end{lstlisting}
The formalism of uniform completions suffices to establish that $\compl{\mcode{coe}}$ is a ring homomorphism simply because $\mcode{coe}$ is, whereas there exists no explicit ring homomorphism $\varphi$ such that $\compl{\varphi}$ \code{= laurent_series_ring_equiv}.
 
\paragraph{The field of $p$-adic numbers as adic completion}
Although the formalization of the isomorphism $\padicCompl\cong\QQ_p$ follows in many respects the one for Laurent series, one notable difference is that the type $\QQ$, unlike $K(X)$, already bore an instance of \code{metric_space}, induced from the euclidean distance, and this induced a uniform structure on $\QQ$. In order to define the term $\padicCompl$ we needed to access the API concerning completions of adic spaces, and to do so a \code{valued} instance needed to be defined on $\QQ$. The corresponding uniform space structure would conflict with the euclidean one, and therefore we needed to locally disable the \code{metric_space} instance on $\QQ$ already just to \emph{define} the type $\padicCompl$.

\begin{acks}
We would like to thank Riccardo Brasca, Mario Carneiro, Antoine Chambert-Loir and Rob Lewis for interesting discussions about possible formalizations of the $p$-adic numbers, Heather Macbeth for helpful conversations about the formalization of Laurent series, Yaël Dillies for providing access to some results about well-founded relations and Cyril Cohen for answering some questions about $p$-adic numbers in Coq. We also thank the \mathlib community as a whole for their support during the completion of this project.

The first author acknowledges support from the Grant CA1/RSUE/2021-00623 funded by the Spanish Ministry of Universities, the Recovery, Transformation and Resilience Plan, and Universidad Autónoma de Madrid. She is also grateful for the one-month invitation from Université Jean Monnet Saint-Étienne, as part of the
program "Appel à Projets Recherche UJM 2023".

This work was supported by the LABEX MILYON (ANR-10-LABX-0070) of Université de Lyon, within the program ``France 2030'' (ANR-11-IDEX-0007) operated by the French National Research Agency (ANR).
\end{acks}
\balance

\bibliographystyle{ACM-Reference-Format}
\bibliography{CPP_biblio}


\end{document}